\newcommand{\soheiltitle}[1]{\noindent\textbf{#1:}}
\newenvironment{soheillist}
  {
  \begin{list}
    {
    }
    {
     \setlength{\labelwidth}{-0.25em}
     \setlength{\labelsep}{.75em}
     \setlength{\itemsep}{2pt}
     \setlength{\leftmargin}{0.5cm}

    }
  }
{\end{list}}
\newtheorem*{theorem*}{Theorem}
\DeclareMathOperator{\EX}{\mathbb{E}}
\renewcommand\footnotetextcopyrightpermission[1]{} 
\begin{document}


\title[A Non-Deterministic CC Approach for 5G and Beyond]{A Simple Non-Deterministic Approach Can Adapt to Complex Unpredictable 5G Cellular Networks}



\author{Parsa Pazhooheshy}
\affiliation{%
  \institution{University of Toronto}
}
\email{parsap@cs.torondo.edu}

\author{Soheil Abbasloo}
\affiliation{%
  \institution{University of Toronto}
}
\email{abbasloo@cs.toronto.edu}

\author{Yashar Ganjali}
\affiliation{%
  \institution{University of Toronto}
}
\email{yganjali@cs.toronto.edu}


\begin{abstract}
5G cellular networks are envisioned to support a wide range of emerging delay-oriented services with different delay requirements (e.g., 20ms for VR/AR, 40ms for cloud gaming, and 100ms for immersive video streaming).
However, due to the highly variable and unpredictable nature of 5G access links, existing end-to-end (e2e) congestion control (CC) schemes perform poorly for them. In this paper, we demonstrate that properly blending \textit{non-deterministic}
exploration techniques with straightforward \textit{proactive} and \textit{reactive} measures is sufficient to design a simple yet effective e2e CC scheme for 5G networks that can: (1) achieve high controllable performance, and (2) possess provable properties. To that end, we designed Reminis and through extensive experiments on emulated and real-world 5G networks, show the performance benefits of it compared with different CC schemes. 
For instance, averaged over 60 different 5G cellular links on the Standalone (SA) scenarios, compared with a recent design by Google (BBR2), Reminis can achieve $2.2\times$ lower 95th percentile delay while having the same link utilization.


\end{abstract}

\maketitle

\section{Introduction}

Congestion Control (CC), as one of the active research topics in the network community, has played a vital role during the last four decades in satisfying the quality of service (QoS) requirements of different applications~\cite{newreno,cubic,bbr2,vegas,orca,sage,natcp,c2tcp,sprout,copa,deepcc,verus,vivace,c2tcp_conf}. Although most of the early efforts for designing CC schemes targeted general networks with their general characteristics, as time went by and new network environments emerged, the idea of designing environment-aware CC schemes showed its advantages (e.g., TCP Hybla \cite{hybla} and TCP-Peach \cite{peach} for satellite communication with its unique loss properties, PCCP \cite{PCCP} and 
TARA \cite{tara} for sensor networks with their unique resource restrictions, and DCTCP~\cite{dctcp} and 
TIMELY \cite{timely}
for data center networks (DCN) with their unique single-authority nature).

One of the important emerging network environments with huge potential is the 5G cellular network. Just in the first quarter of 2022, the number of connections over 5G reached more than 700 million, while it is expected that by the end of 2026, this number will surpass 4.8 billion globally~\cite{5G_americas}. Considering such a huge increase in the number of 5G users, the wide variety of current and future applications, and the range of new network characteristics and challenges it brings to the table, the need for a 5G-tailored CC scheme reveals itself.

\subsection{What Makes 5G Different?}
\label{sec:intro:5g_is_diff}

\par
\noindent\textbf{Orders of Magnitude Larger Bandwidth-Delay Product:}
Recent measurements have shown that current millimeter-wave (high-band) 5G networks can achieve, on average, $\approx$1 Gbps link capacities (and up to 2 Gbps) and around 20 ms e2e delays~\cite{xu2020understanding,narayanan2020lumos5g}. Compared to a DCN with 100Gbps access links and $10\mu s$ e2e delay, 5G networks can have, on average 40$\times$ larger bandwidth-delay product (BDP). Compared to its predecessor, e.g., a 4G network with a 20 Mbps link and 40 ms e2e delay, 5G networks have on average 50$\times$ larger BDP.


\noindent\textbf{Highly Variable \& Unpredictable Access Links:} 
One distinguishing characteristic of 5G links compared to its predecessors, is the wide range of link capacity fluctuations. While 5G links can reach a capacity as high as 2 Gbps, they can quickly drop below 4G link capacities or even to nearly zero (5G “dead zones”)~\cite{narayanan2020lumos5g}. For instance, the standard deviation in 120 5G link capacities collected in prior work is around 432 Mbps~\cite{narayanan2020lumos5g}~\footnote{For example, considering 4G/3G traces gathered in prior work \cite{deepcc,c2tcp}, this value is about two orders of magnitude larger than its 4G/3G networks' counterpart}. 
\color{black}
The changing dynamics of the environment such as user mobility, environmental obstacles, and other 5G network factors like network coverage, 5G cell size, and handover are among some of the main reasons for these highly unpredictable fluctuations in the link capacity.
\noindent\textbf{Emerging Applications with Unique Delay Requirements:} 
5G networks are envisioned to serve a diverse set of emerging delay-sensitive applications such as AR/VR, online gaming, vehicle-to-vehicle communications, tactile Internet, remote medical operations, and machine learning-enabled services. 
Cellular providers have already started deploying some of these applications such as cloud gaming, real-time augmented/virtual reality (AR/VR), and immersive video streaming at the edge of their networks~\cite{5gapps}. 
Each of these applications has different delay constraints. For instance, the acceptable delay for AR/VR is about 20ms \cite{ar_vr}, for cloud gaming, the target delay is about 40-60ms, and for immersive video streaming the delay shall not be more than 100ms~\cite{5g_use_cases}. 


\noindent\textbf{New Features, New Opportunities:}
The goal of supporting delay-sensitive applications in 5G has popularized the integration of 5G and edge-based frameworks such as mobile edge computing (MEC) ~\cite{mec}. Such an edge-based integration, when combined with other new technologies such as 5G network slicing~\cite{5g_slicing} can provide interesting new design opportunities for CC design. For instance, in an edge-based cellular architecture where users have their own isolated logical networks on top of the shared infrastructures, the concern of TCP-friendliness becomes less of an issue for a 5G-tailored CC scheme.\\


\subsection{Impact of 5G's Unique Properties on CC}
\label{others_fail}

\begin{figure}[!pt]
    \centering
    \includegraphics[width=0.9\linewidth]{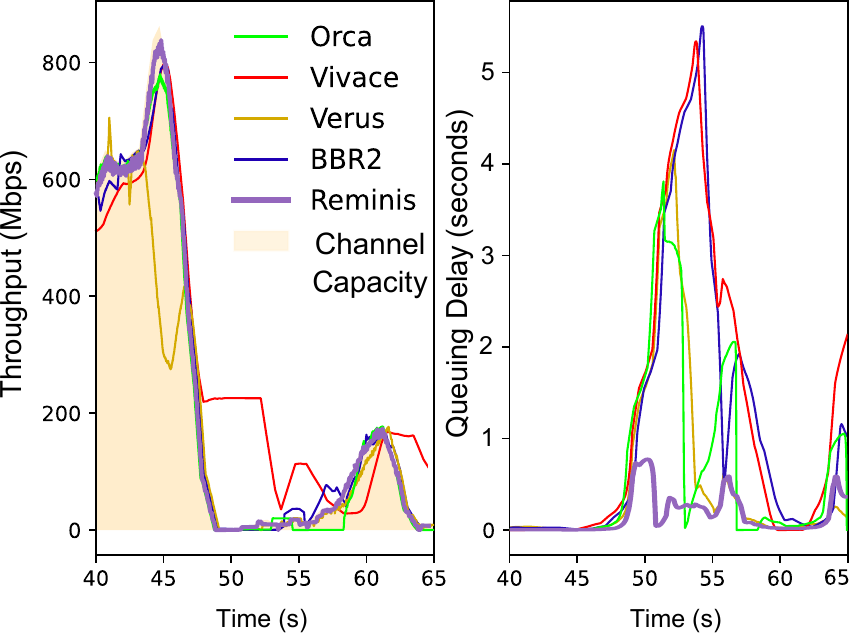}
    \caption{Performance of state-of-the-art CCA on a slice of a sample 5G trace gathered by prior work~\cite{narayanan2020lumos5g}}
    \label{fig:motivation} 
\end{figure}

As a simple motivating experiment, here (following the setting described in Section~\ref{sec:eval:setting}) we use a 5G cellular trace (gathered in prior work\cite{narayanan2020lumos5g}) and report the delay and throughput performance of some recent CC schemes namely BBR2~\cite{bbr2} (representing white-box approach), Verus~\cite{verus} (representing 4G-tailored schemes), Orca~\cite{orca} (representing RL-based designs), and Vivace~\cite{vivace} (representing online-learning designs) over a 20s period of this trace. 
As Fig. ~\ref{fig:motivation} illustrates, the 5G link capacity falls from nearly 1 Gbps to about zero in just 3 seconds (a sample of high variability and unpredictability of 5G links). In this setting, BBR2 faces a clear delay issue to the extent that it generates more than 5 seconds of queuing delay. A white-box CC approach such as BBR2 assumes that the network always follows a certain model. However, when it faces an unpredictable 5G link that clearly diverges from BBR2's wired model of the network, BBR2 cannot adapt to the dynamics of the network quickly and fails to deliver the desired performance. On the other hand, Verus, a design targeting 4G cellular networks, is a black-box approach and tries not to rely on a pre-built model of the network. However, it is very slow, fails to keep up with the available link capacity, and ends up with low link utilization and high queuing delay. 
Considering the performance of Orca and Vivace, it is clear that the learning-based schemes have a hard time in this setting as well. Although Orca, as one of the state-of-the-art reinforcement learning (RL) based schemes, performs better than BBR2 and Verus, it still can experience large queuing delays. We think that the reason lies in the generalization issue of the RL-based designs and the fact that Orca has not seen these scenarios during its training phase. However, how to train an RL-based scheme to achieve high generalization is still a big unanswered question~\cite{orca}. Vivace addresses the issue of the need for offline training by exploiting online-learning techniques.
However, when utilized in a 5G setting, Vivace cannot keep up with the unpredictable and fast-varying nature of the network. For instance, after the drop of capacity at around 47s (in Fig.~\ref{fig:motivation}), it takes more than 10 seconds for Vivace to adjust its sending rate to a value lower than the link's capacity. \color{black} 

\subsection{Design Decisions}
Putting all together, existing general-purpose heuristics (e.g., BBR2), 4G specialized heuristics (e.g., Verus), and convoluted learning-based designs (e.g., Orca) cannot achieve high performance in 5G networks, especially when emerging delay-sensitive applications are considered.\footnote{Appendix~\ref{sec:app:related} briefly overviews some of the related works.} This sheds light on why in this work, we are motivated to design a performant CC scheme for one of the fastest-growing means of access to the Internet, 5G cellular networks. 
To that end, we target two main properties for our design:
\begin{soheillist}
    \item Simplicity \& Interpretability: In contrast with tangled learning-based schemes, we seek a simple design that is easy to reason about and possesses provable properties. 
    \item Adaptability: Showing the performance shortcomings of the existing CC heuristics in 5G networks, our goal is to design a CC scheme that can effectively and efficiently adapt to the dynamics of complex 5G networks and achieve high performance in terms of throughput and delay. 
\end{soheillist}
Favoring simplicity and interpretability led us to avoid employing convoluted techniques such as learning-based ones in this work. Instead, we go back to simple and intuitive principles to design an effective heuristic (called Reminis) that can adapt to highly variable 5G networks\footnote{This work does not raise any ethical issues.}. 
In particular, examples such as the one shown in Fig.~\ref{fig:motivation} indicate that in highly variable and unpredictable 5G cellular links, gaining high utilization requires agile mechanisms to cope with the sudden increase in the available link capacities while achieving low controlled e2e delay requires effective fast proactive and reactive techniques to avoid bloating the network when link capacities suddenly decrease. Based on these observations, Reminis utilizes two key techniques: (1) \textit{non-deterministic explorations} for discovering suddenly available link capacities, and (2) \textit{fast proactive and agile reactive slowdowns} to avoid bloating the network.
As Fig.~\ref{fig:motivation} illustrates, using these intuitive strategies enables Reminis to effectively achieve high throughput while keeping the e2e delay of packets very low. Sections~\ref{sec:design_overview} and~\ref{detailed_design} elaborate more on these design decisions.


\color{black}

\subsection{Contributions}
\label{sec:contribution}
Our key contributions in this paper are as follows.
\begin{soheillist}
\item By designing Reminis, we demonstrate that without the need to use convoluted learning techniques or prediction algorithms, using lightweight yet effective techniques can lead to promising performance on 5G links.
    
\item We mathematically analyze Reminis and prove that it converges to a steady state with a bounded self-inflicted queuing delay that can be controlled in an e2e fashion.
    
\item Through extensive experiments over various emulated 5G cellular traces and a real-world deployed 5G network in North America, we illustrate that Reminis can adapt to highly unpredictable 5G cellular links~\footnote{For example, in our emulations over 60 different 5G traces, when compared to a recent work by Google, BBR2, Reminis can achieve up to 3.6$\times$ and on average $2.2\times$ lower 95th percentile delay while having the same link utilization as BBR2.}.
    
\item As a side effect of our efforts to evaluate Reminis and make reproducible experiments, we debugged and improved  Mahimahi~\cite{mahi}, which cannot emulate high-capacity 5G links\footnote{For more details, see Appendix~\ref{appendix:MM_patch} and discussions therein.}. Our Mahimahi patch along with Reminis' framework are publicly available to facilitate the community's further research on 5G CC \cite{remi_git_repo}. 
\end{soheillist}

\begin{figure}
\begin{center}
\includegraphics[width=0.9\linewidth]{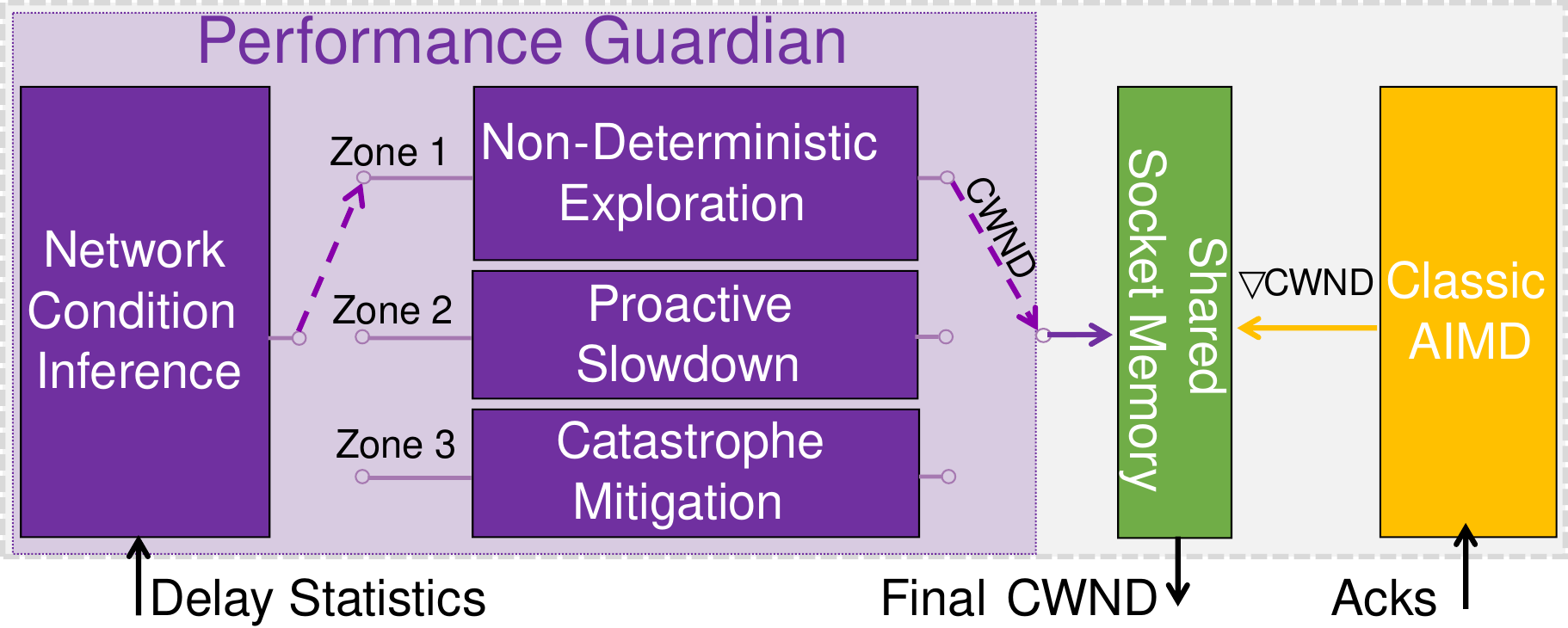}
\end{center}
\caption{ Reminis High-Level Block Diagram}
\label{bb}
\end{figure}

\section{Design Overview}
\label{sec:design_overview}
As shown in Fig.~\ref{bb}, Reminis is composed of two main components: (1) the Classic AIMD unit, an Ack-triggered logic performing the well-known AIMD behavior upon receiving Ack packets, and (2) the Performance Guardian module (or Guardian in short) which runs periodically and adjusts CWND to keep delay low while maintaining high throughput. 

Upon activation in each period, the Guardian exploits a two-step logic. In the first step, the Network Condition Inference (NCI) module, utilizes the history of delay statistics (i.e., delay  and its first-order derivative) 
and infers the current condition of the network. NCI only uses simple e2e RTT samples as input. As we show later in sections~\ref{sec:design:nci} and~\ref{sec:eval}, the simple e2e RTT sample is sufficient for distinguishing different network conditions. 
In the second step, based on the inferred network condition, the Guardian activates one of the following three modules: (1) Non-Deterministic Exploration (NDE), (2) Proactive Slowdown (PS), or (3) Catastrophe Mitigation (CM). 

In particular, if inferred network condition suggests that there is a potential room for gaining higher throughput, the Guardian activates the NDE module to discover further unutilized network bandwidth. On the other hand, if NCI  expects 
an unwanted delay surge in the near future, the Guardian activates the PS module to proactively reduce the chance of a future increase in the delay. If the proactive measures had not been successful and the observed delay has already increased significantly, as the last measure, the CM module is activated. The CM block has a reactive logic and enforces a dramatic decrease of CWND to avoid further increases in delay. In Section~\ref{detailed_design} we elaborate more on the details of these modules, their effectiveness, and their necessity in adapting to highly variable 5G access links and steering Reminis to very high performance. 

\textbf{Why Guarding Periodically and not Per-Packet?}
There are two important reasons behind the periodic nature of the Guardian's task. 
First, as mentioned in \ref{sec:intro:5g_is_diff}, due to the high variability of 5G cellular networks, jumping to any conclusions about network conditions solely based on the statistics of one packet is not reasonable. Hence, it is important to monitor and extract a more stable view of the network by considering more packets. 
In other words, any per-packet measurement in a highly variable network is prone to noise and can lead to inferring wrong network conditions, while having more \textit{samples} from the environment, can potentially help to get a better picture of the network. That is why Reminis utilizes samples observed in periodic intervals. We refer to these intervals as sampling intervals (SI). Second, as discussed earlier, 5G access links can have high capacities. A direct impact of this property on any logic that makes per-packet decisions is a potentially higher CPU utilization compared to periodic logic. 

\textbf{The Role of AIMD Block:}
As discussed, guarding periodically can be helpful in several ways. However, in a highly varying network, only relying on periodic logic can possibly lead to a lack of agility for the system. In other words, a logic that purely relies on periodic samples can react slowly to the changes in link capacity during each SI.
As being agile is significantly vital for CC schemes targeting cellular networks, Reminis harnesses a classic AIMD block to perform extra per-packet reactions during SIs. Later in Section~\ref{sec:eval}, we show this technique not only makes Reminis very agile (e.g., Fig.~\ref{AIMD_fig}) and high performance (e.g., Fig~\ref{fig_general}), but also makes it a very light-weight scheme with very low CPU overhead (Fig.~\ref{cpu_util}) which is another fundamental requirement for a successful CC scheme in high capacity 5G networks.


\color{black}
\section{Reminis Design}
\label{detailed_design}
In this section, we discuss the main components of the Guardian block. First, we introduce the NCI module responsible for inferring network conditions based on delay statistics. Then, we describe the modules responsible for modifying the CWND namely NDE, PS, and CM. 
\color{black}
\subsection{Network Condition Inference (NCI)}
\label{sec:design:nci}
The NCI module uses two signals to infer network conditions: $1)$ delay and $2)$ delay derivative. Delay is measured with end-to-end RTT samples, and as the Guardian runs periodically, once every SI, the delay derivative is defined to be the difference between two consecutive delay values divided by the time difference between two queries.\par
Reminis has a target for its delay which is denoted as the delay tolerance threshold (DTT). This value could be defined based on (or by) the target application or Reminis can use its default value for DTT which is $1.5 \times mRTT$. Here, mRTT is the observed minimum RTT  since starting the flow (which is not necessarily equal to the exact/actual minimum RTT of the network). Considering DTT as the delay target, the NCI module uses the statistics of the delay signals and deduces the network condition in each SI. Later, these inferred conditions will be exploited by NDE, PS, and CM blocks. The length of each SI is equal to the mRTT. 
\color{black}
To keep Reminis simple and lightweight, we use straightforward delay signals for defining network conditions (NCI Zones). In particular, denoting the delay and delay derivative in each $SI_n$ respectively with $d_n$ and $\nabla d_n$, Reminis divides the delay space into three Zones. 
\color{black}
\\ \textbf{\emph{Zone 1}}:  $ d_n \le DTT \: \& \: \nabla d_n  \le 0$, \label{cases-1}
\\ \textbf{\emph{Zone 2}}:  $ d_n \le DTT \: \& \: \nabla d_n >0$, and \label{cases-2}
\\ \textbf{\emph{Zone 3}}:  $ d_n> DTT$. \label{cases-3}

Zone 1 indicates that the delay is below DTT and decreasing, which Reminis interprets as having room for sending more packets for the benefit of getting more throughput. The main reason for this deduction is that the negative delay derivative shows that the sending rate is less than channel capacity and the queue is depleting. Also having a delay of less than DTT gives room for some exploration. \\
Zone 2 shows that the delay is still below DTT but increasing which means that keeping the current CWND or increasing it might lead to a violation in DTT as the positive delay is a sign of queue building up. \\
Finally, Zone 3 indicates that the sender CWND should decrease harshly as being in this Zone means that the delay has exceeded DTT. Many reasons such as 5G dead zones, which are common in 5G networks, can result in transitioning into this Zone.
\begin{equation}
\footnotesize
\label{Safe_Zone euqation}
\texttt{\texttt{{SafeZone}}} (d)= 1 - \dfrac{d - mRTT}{DTT- mRTT}
\end{equation}
\par
In order to quantify how much delay has exceeded DTT, the Guardian uses a function called \texttt{SafeZone}, defined as in Equation \ref{Safe_Zone euqation}.
Based on the zone inferred by the NCI module, one of the NDE, PS, or CM modules will be activated. 
Algorithm \ref{alg:1}, shows one iteration of the NCI module's logic.
\begin{algorithm}[!h]
\caption{The Guardian}\label{alg:1}
\small
 d\_derivative = (d\_now - d\_prev)/{interval}; \\
 cum\_d\_der += d\_derivative;\\
 sz\_now = \texttt{SafeZone}  (d\_now );\\
 
 \uIf{sz\_now < 0}{
     \texttt{{CatastropheMitigation}} (sz\_now);\\
  }
  \uElseIf{d\_derivative > 0}{
    \texttt{ProactiveSlowdown} (d\_now, d\_derivative);\\
  }
\uElseIf {d\_derivative < 0 }{
   \texttt{{NDExploration}} (cum\_d\_der)\;
  }
d\_prev = d\_now;\\
\end{algorithm}
\begin{figure*}
    \centering
    \begin{minipage}[b]{0.33\linewidth}
        \centering
       \includegraphics[width=0.9\linewidth]{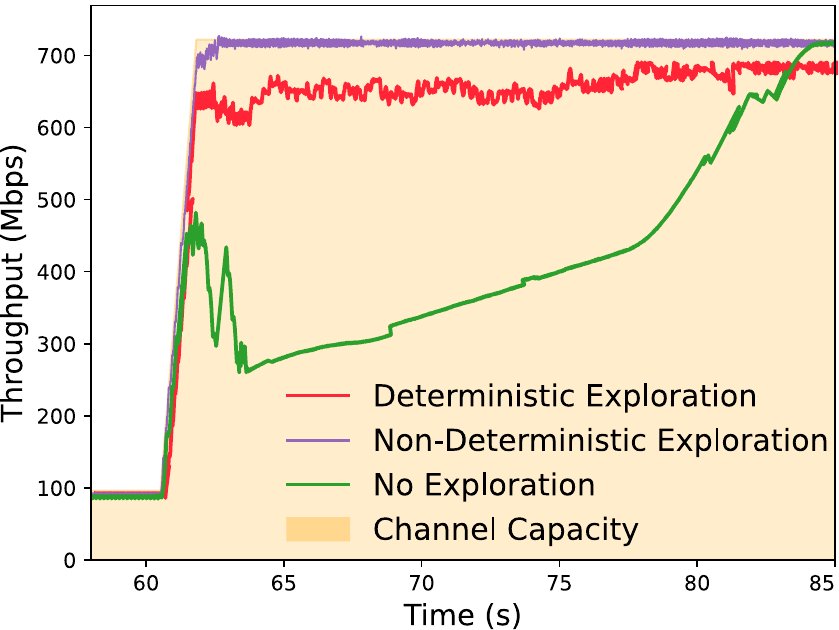}
       \caption{NDE Block in Action}
       \label{fig:design:nde}    
    \end{minipage}
        \begin{minipage}[b]{0.33\linewidth}
            \centering
       \includegraphics[width=0.93\linewidth]{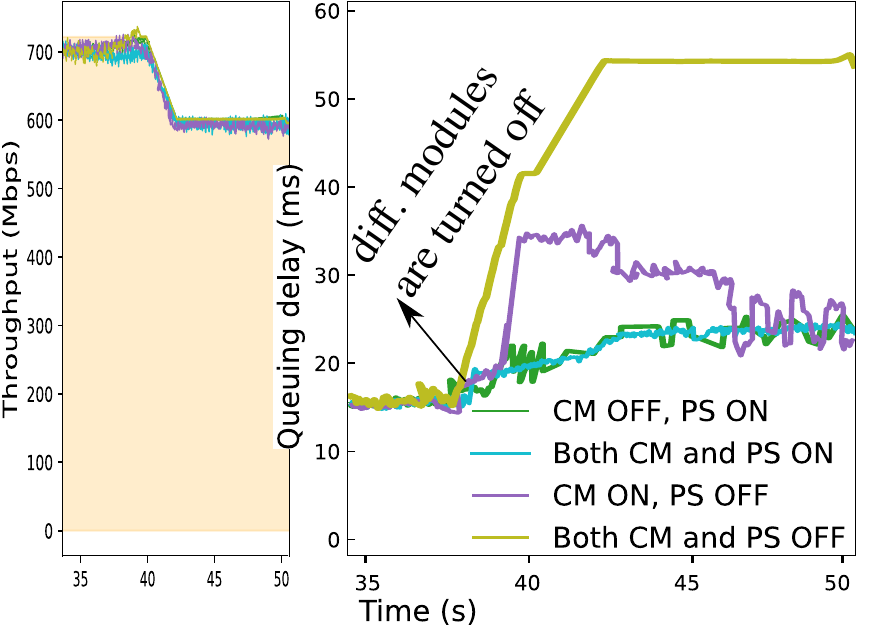}
       \caption{PS Block in Action}
       \label{fig:design:ps}    
    \end{minipage}
        \begin{minipage}[b]{0.33\linewidth}
            \centering
        \includegraphics[width=0.93\linewidth]{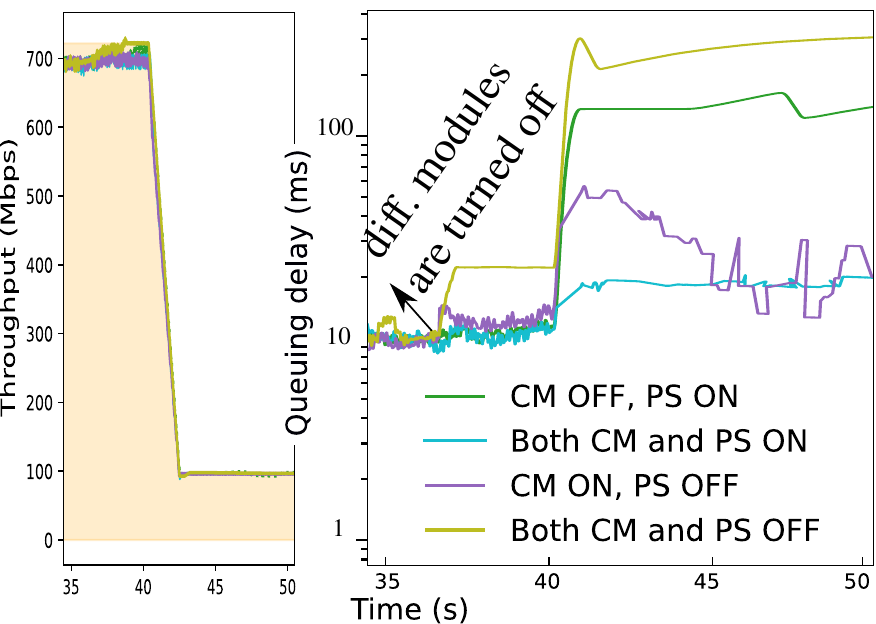}
       \caption{CM Block in Action}
       \label{fig:design:cm}    
    \end{minipage}
\end{figure*}
\subsection{Non-Deterministic Exploration (NDE)}
\label{desing::NDE}
On the one hand, as discussed earlier, being in Zone 1, indicates there is room for sending more packets for the benefit of getting more throughput. On the other hand, many different factors such as user mobility, dynamics of the physical obstacles, the wireless scheduler's algorithms that divide resources between users through time, etc. make 5G access links highly unpredictable. This means although Zone 1 can indicate a possible chance to gain more throughput, it cannot identify the exact amount of such an increase. 

In that landscape, the NDE module is responsible for discovering and utilizing available capacity, without the risk of bufferbloat. To that end, when Zone 1 is inferred by the NCI module, the Guardian activates the NDE module to explore different CWND values in a non-deterministic fashion so that it can address the unpredictable nature of the available link capacities. This can help Reminis utilize the sudden unpredictable surges in access link capacity in a better way. 

However, it's important to make sure that when NDE is exploring any available link capacity, it does not bloat the user's queue in an uncontrollable manner. To address this issue, NDE block controls the average of the stochastic decision-making process with regard to the general trend of the link capacity. In particular, the exploration needs to be more aggressive if the Guardian has been measuring high negative delay derivatives as more negative delay derivative indicates that the sending rate is far less than channel capacity.\color{black}
\par
To this end, the NDE module maintains a Gaussian distribution, $\mathcal{N}(\mu_{n},\,\sigma^{2}_{n})$, where the mean and variance of this distribution change by each delay derivative measured in every SI based on $\mu_{n} \leftarrow \mu_{n-1} - \nabla d_n$ and  $\sigma^{2}_{n} = \frac{\mu_{n}}{4}$ update rules.
Upon activation, the NDE module draws one sample from the Gaussian distribution, $x \sim \mathcal{N}(\mu_{n},\,\sigma^{2}_{n})$, and feeds the sample to a Sigmoid function, $S(x) = \frac{1}{1+ e^{-x}}$. Then, the output of the Sigmoid function is used to increase the current CWND by multiplying the current CWND by $2^{S(x)}$ as shown in Equation~\ref{ndem}.
\begin{equation}
\small
\label{ndem}
    cwnd_n \leftarrow cwnd_n \times 2^{S(x)}
\end{equation}

\par
The range of the Sigmoid function is $(0,1)$, so the NDE module will increase the CWND by a factor between 1 and 2. If Reminis starts measuring negative delay derivatives consecutively, the incremental factor generated by this module will be close to 2, which helps Reminis to adapt to any increase in link capacity quickly. On the other hand, if Reminis measures one negative delay derivative after many positive delay derivatives, the stochastic exploration will be more conservative and increases the CWND by factors slightly more than 1. \par 
We prove that 
the NDE module makes Reminis faster than an AIMD module alone. In particular, assuming that $w_1$ is the CWND that fully utilizes the fixed link without causing any queue build-up, we prove that:
\begin{theorem*}
Reminis helps the AIMD logic to reach $w_1$ in $\mathcal{O}(\log{}w_1)$ instead of $\mathcal{O}(w_1)$ in congestion avoidance phase.
\end{theorem*}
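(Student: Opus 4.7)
The plan is to compare the number of sampling intervals (SIs) Reminis needs to grow CWND from an initial value $w_0$ up to $w_1$ against the $\Theta(w_1)$ RTTs required by pure AIMD. The key insight is that whenever the NDE module is invoked during the ramp-up, it multiplies CWND by a factor in $[1,2]$ that is bounded away from $1$; this turns arithmetic growth into geometric growth.

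First I would argue that Reminis repeatedly enters Zone~1 throughout the ramp-up. With a fixed link whose saturating window is $w_1$ and CWND $<w_1$, the queue cannot be persistently building, so the observed delay stays near $mRTT$ (well below $DTT$) and each transient drain of the queue produces a non-positive $\nabla d_n$; by Algorithm~\ref{alg:1}, such SIs trigger NDE. Next I would analyze the NDE distribution. Because the update rule $\mu_n \leftarrow \mu_{n-1} - \nabla d_n$ is non-decreasing in Zone~1 and strictly increases on any SI with $\nabla d_n<0$, one shows that $\mu_n$ is bounded below by a positive constant after $\mathcal{O}(1)$ SIs. Once $\mu_n$ is bounded away from $0$, concentration of $\mathcal{N}(\mu_n,\mu_n/4)$ forces $\mathbb{E}[S(x)]$ above some constant $s_0 \in (1/2,1)$, and by Jensen's inequality $\mathbb{E}[2^{S(x)}] \ge 2^{s_0} > 1$.

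Combining these, each NDE-triggered SI multiplies the expected CWND by a factor of at least $r := 2^{s_0} > 1$, so after $k$ such SIs the expected CWND is at least $w_0\cdot r^k$. Setting $w_0 \cdot r^k = w_1$ gives $k = \mathcal{O}(\log w_1)$, which dominates the additive $+1$-per-RTT contribution of the concurrent AIMD block; a high-probability version follows from a Chernoff-type bound on products of bounded i.i.d.\ factors $2^{S(x_n)}\in[1,2]$. The main obstacle I expect is controlling the initial regime: when $\mu_n$ is still close to $0$, each multiplicative factor is only marginally above $1$, so one must argue carefully that the first few SIs push $\mu_n$ past a constant threshold before the geometric phase truly kicks in. A secondary subtlety is interpreting ``reaches $w_1$'' as a first-hitting time, since CWND will leave Zone~1 almost immediately after surpassing $w_1$.
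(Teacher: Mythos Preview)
Your high-level approach---show that NDE multiplies CWND by a constant factor bounded above $1$ each SI, converting AIMD's arithmetic growth into geometric growth---is exactly the paper's route. However, you take a detour that the paper avoids, and your stated ``main obstacle'' is an artifact of that detour.

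The paper's argument is simpler than yours because it observes that while $cwnd<w_1=\mathrm{BDP}$ on a fixed link, \emph{no queue forms at all}, so the measured delay sits at $mRTT$ and hence $\nabla d_n=0$ (not merely $\le 0$) throughout the entire ramp-up. With $\nabla d_n=0$ the update $\mu_n\leftarrow\mu_{n-1}-\nabla d_n$ never moves, and $\mu$ stays pinned at its \emph{default initialization} $\mu=1$, $\sigma^2=\tfrac14$. The paper then computes the expected multiplier once, $\mathbb{E}[2^{S(x)}]\approx 1.5$ for $x\sim\mathcal{N}(1,\tfrac14)$, and solves the explicit recursion $cwnd_n=1.5(cwnd_{n-1}+1)$ to get $n_{w_1}<4\log(3w_1)$.

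Your plan instead tries to \emph{manufacture} positivity of $\mu_n$ from strictly negative derivatives (``each transient drain of the queue produces a non-positive $\nabla d_n$\ldots strictly increases on any SI with $\nabla d_n<0$''). But under the paper's assumptions there are no such SIs: the derivative is identically zero, so your mechanism for pushing $\mu_n$ past a threshold never fires. What actually rescues the argument is the default $\mu_0=1$ (which you do not invoke), after which non-decrease of $\mu_n$ in Zone~1 suffices. Once you use that, the ``initial regime'' obstacle you flag disappears entirely---there is no regime where $\mu_n$ is close to $0$. In short: right skeleton, but you are missing the one observation ($\nabla d=0$ and $\mu$ fixed at its default) that turns the proof into a two-line computation.
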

Due to the space limitation, the proof and the detailed assumptions are provided in Appendix~\ref{Speed_up_proof}.

\soheiltitle{NDE in Action}
In a nutshell, the NDE module in Reminis is responsible for tackling scenarios in which the 5G access link experiences sudden surges in link capacity and utilizing these surges in an agile manner. To illustrate the effectiveness of NDE in practice, we use a toy example where the link capacity increases from 100 Mbps to 720 Mbps in a few seconds (Fig.\ref{fig:design:nde}) and compare it with two alternatives: (1) No-Exploration and (2) Deterministic exploration. For the No-Exploration version, we simply turn off the NDE block and for the Deterministic version, we always use the updated mean of the Gaussian distribution instead of using random samples drawn from the Gaussian distribution. 
As Fig.\ref{fig:design:nde} illustrates, without the exploration module, Reminis suffers from heavy under-utilization. 
Deterministic exploration can improve over the No-Exploration version; however, the sending rate still converges to the channel capacity very slowly. In contrast, the NDE block enables Reminis to converge to the new channel capacity very fast. 


\color{black}
\subsection{Proactive Slowdown (PS) and \\Catastrophe Mitigation (CM)}
Considering the high fluctuations of the 5G access links, the main role of PS and CM modules is to effectively control the e2e delay without causing significant underutilization.

\color{black}
\soheiltitle{Proactive Slowdown} 
\label{desing::PS}
This module is activated whenever the NCI infers Zone 2. When being in Zone 2, Reminis needs to be prudent so it can prevent any violation of DTT in the next SI. The PS module decreases the current CWND if the delay gets too close to DTT. To detect when the delay is close to DTT (i.e., risk of DTT violation), PS calculates the expected delay in the next SI, using a first-order regression predictor, as in Equation~\ref{prediction}.

\begin{equation}
\small
\label{prediction}
d_{n+1} = d_{n} + \nabla d_{n} \times SI
\end{equation}
\par
Equation \ref{FCP} shows how the PS module decreases the CWND upon activation. The main responsibility of this module is to reduce the CWND if the calculated expected value of delay in the next SI is more than DTT. This module will be harsher in decreasing the CWND if its expectation for DTT violating in the next SI gets larger.
\begin{equation}
\small
    \label{FCP}
    cwnd_n \leftarrow cwnd_n \times 2^{min (0,{\texttt{SafeZone}}( d_{n+1}))}
\end{equation}

\soheiltitle{Catastrophe Mitigation}
Many reasons, such as sudden decreases in link bandwidth, can cause Reminis to end up in Zone 3 despite the PS module actions. In these types of scenarios, we want Reminis to decrease the delay as soon as possible to meet the delay requirement. Therefore, upon the inference of Zone 3 by the NCI, the CM module will be activated.
CM decreases the CWND by at least half upon activation. The decrease would be harsher in proportion to DTT violations.
Equation \ref{dca} shows the CWND update rule by this module.
\begin{equation}
\label{dca}
    cwnd_n \leftarrow cwnd_n \times 2^ {\texttt{SafeZone} ({d}_n)}\times 0.5
\end{equation}
Algorithm \ref{alg:2}, describes the NDE, PS, and CM modules' logic. 

 
\begin{algorithm}[!h]
\caption{NDE, PS, and CM}\label{alg:2}
\small
\SetKwProg{Fn}{Function}{:}{}
\Fn{NDExploration(cum\_d\_der)}{
  $\mu = \text{cum\_d\_der}$\;
  $\sigma^{2} = \frac{\mu}{4}$\;
  $x \sim \mathcal{N}(\mu,\,\sigma^{2})$\;
  $cwnd \leftarrow cwnd \times 2^{\frac{1}{1+e^{-x}}}$\;
  \KwRet\;
}
\Fn{ProactiveSlowdown(d\_now, d\_derivative)}{
  $expected\_d\_nxt =  d\_now + (d\_derivative \times \text{interval})$\;
  $expected\_sz\_nxt = \text{Safe\_Zone}(expected\_d\_nxt)$\;
  \uIf{$expected\_sz\_nxt < 0$}{
    $cwnd \leftarrow cwnd \times 2^{expected\_sz\_nxt}$\;
  }
  \KwRet\;
}
\Fn{CatastropheMitigation(sz\_now)}{
  $cwnd \leftarrow cwnd \times 2^{sz\_now} \times 0.5$\;
  \KwRet\;
}
\end{algorithm}

\soheiltitle{PS and CM in Action}
5G access links experience sudden drops due to several reasons described in \ref{sec:intro:5g_is_diff}. These drops could cause a significant increase in delay and violate the delay requirements of 5G applications as a result. So, here, we use two examples to show how PS and CM blocks help Reminis effectively control delay over varying 5G access links. In both examples, we turn off different blocks right before the changes in the capacity and capture their impact on the performance of Reminis. In particular, in the first example, we focus on scenarios where the decrease in access link capacity is relatively small. Fig.~\ref{fig:design:ps} shows such a scenario where the link capacity decreases from 720 Mbps to 600 Mbps. As Fig.~\ref{fig:design:ps} depicts,
in this scenario, the PS block is sufficient to control the queuing delay during the transition. Here, the PS block reacts to the decrease of link capacity by reducing CWND according to Equation \ref{FCP}. This enables Reminis to keep the delay below the DTT and decrease the 95th percentile of queuing delay by 2.3$\times$ compared to the case where the PS module is turned off. 

In the second example, we focus on sudden relatively large decreases in link capacity. For example, Fig.~\ref{fig:design:cm} 
shows a scenario where the link capacity decreases from 720 Mbps to 100 Mbps. As it is clear from Fig.~\ref{fig:design:cm}, the PS module alone is not sufficient and the CM block becomes a key component to control the delay. In particular, the CM module alone ("CM ON, PS OFF" in Fig.~\ref{fig:design:cm}) can eventually control the delay in this scenario. In fact, when the delay surpasses the DTT, the CM module is activated and decreases the CWND by more than half (based on Equation \ref{dca}). Note that, unlike the PS module which has a proactive nature, the CM module is reactive leading to a temporary surge in delay when the PS module is off. In contrast, when both CM and PS are on, the PS module enhances the performance by controlling the delay during the transition ($[40,45s]$ in Fig.\ref{fig:design:cm}). In sum, the proactive aspect of the PS component combined with the reactive aspect of the CM block boosts the overall performance of Reminis. Even in the steady-state, ($[45,50s]$ in Fig.\ref{fig:design:cm}), this combination benefits Reminis and lowers the delay oscillation.
\subsection{Reminis' Steady State}
One of our main quests was to design a simple CC scheme with provable properties. Considering that, 
we mathematically prove that
the following Theorem summarizes the convergence property of Reminis (considering $q_{th}=DTT$):
\begin{theorem}
On average, Reminis converges to a steady state with a queuing delay no more than $(1+S(\frac{\ln{4}-1}{2BDP})\ln{2})q_{th}$.
\end{theorem}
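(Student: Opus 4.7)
The plan is to model Reminis as a discrete-time stochastic process on $(w_n, q_n)$---CWND and queuing delay sampled at the $n$-th Guardian invocation---over an idealized constant-capacity bottleneck with $BDP = C \cdot mRTT$, so that $q_n = \max(0, (w_n - BDP)/C)$. The goal is to establish positive recurrence of this process (the ``steady state''), pin down the stationary mean $\mu^{\ast}$ of the NDE driver, and translate this into a bound on the expected queuing delay.

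First I would establish convergence. Whenever $q_n > q_{th}$, CM scales $w$ down by at least a factor of $0.5$; NDE increases $w$ by at most a factor of $2$; PS contracts $w$ by a factor in $(0, 1]$ whenever the linear predictor forecasts a DTT violation; and the classic AIMD block contributes a bounded per-SI additive term. These bounds combine into a Foster--Lyapunov drift estimate on $\log_2 w_n$, ruling out escape to infinity and yielding a unique stationary distribution for the Markov chain, which I identify with the steady state.

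Second I would characterize the stationary mean of the NDE driver and convert it to a delay bound. The telescoping identity $\mu_n = -\sum_{k \leq n} \nabla d_k$, together with stationarity of $d_n$, forces $\mu_n$ to be stationary with some mean $\mu^{\ast}$. Writing a one-step balance that equates the expected multiplicative NDE increase $\EX[2^{S(x)}]$ with $x \sim \mathcal{N}(\mu^{\ast}, \mu^{\ast}/4)$ to the expected multiplicative decrease from PS/CM plus the AIMD additive offset of order $1/BDP$, then linearizing using the small-variance regime $\sigma_n^2 = \mu_n/4 = O(1/BDP)$, I would solve the fixed point and obtain $\mu^{\ast} = (\ln 4 - 1)/(2\, BDP)$; the $1/BDP$ scale reflects the relative weight of the AIMD unit increment against the equilibrium CWND. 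Finally, using the one-step recurrence $q_{n+1} \leq q_n \cdot 2^{S(x_n)}$ during the NDE phase at the boundary $q_n = q_{th}$, commuting expectation with the sigmoid via $\EX[S(x)] \approx S(\mu^{\ast})$ with error $O(1/BDP)$, and invoking the first-order bound $2^{y} \leq 1 + y \ln 2$ at the stationary point (with a controlled second-order remainder) yields
\begin{equation}
\EX[q] \;\leq\; \bigl(1 + S(\tfrac{\ln 4 - 1}{2\, BDP})\, \ln 2\bigr)\, q_{th}.
\end{equation}

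The hardest step will be making the one-step balance equation rigorous: within a single SI both the stochastic NDE draw and the additive AIMD term modify $w$, and the PS action in the following SI depends on a linear prediction that is itself a function of the previous NDE sample. I would handle this by decomposing each SI into its sub-events---AIMD between packets, then NDE or PS at the Guardian tick, with CM as an absorbing correction---and bounding contributions separately before integrating against the stationary distribution of $\mu_n$. A secondary technicality is commuting expectation with the sigmoid and exponential nonlinearities, which I would control with a second-order Taylor remainder using the fact that the Gaussian variance $\sigma_n^2 = \mu_n/4 = O(1/BDP)$ in steady state, so $S(x)$ concentrates sharply around $S(\mu^{\ast})$.
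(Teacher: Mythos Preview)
Your high-level plan (Foster--Lyapunov recurrence, then a stationary fixed point for the NDE mean $\mu^\ast$, then a one-step delay bound) is genuinely different from the paper's argument, but the middle step does not go through. The paper does not find a stationary $\mu^\ast$ at all: it performs an explicit trajectory walk-through over a sequence of sampling intervals $t_1,\dots,t_7$, tracking $w_n$, $q_n$, and $\mu_n$ at each step using two lemmas (the probit approximation $\EX[S(x)]\approx S(\mu/\sqrt{1+\pi\sigma^2/8})$ and the first-order Taylor expansion of $2^x$), and shows that the state at $t_7$ coincides with the state at $t_2$, so the behavior is \emph{periodic} in expectation. The peak queuing delay in the cycle occurs at $t_4$, one SI after the single NDE firing at $t_3$, and equals $(1+\delta_4)q_{th}$ with $\delta_4=\ln 2\cdot S(-\delta_3)$ and $-\delta_3=(\ln 4-1)/(2\,BDP)$. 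Crucially, $(\ln 4-1)/(2\,BDP)$ is not a stationary mean obtained from any balance equation: it is the value of $\mu$ at the single SI $t_3$, produced by the concrete two-step sequence ``PS multiplies $w_2$ by $2^{-1/w_1}$, then AIMD adds $+1$,'' giving $\delta_2=1-2\ln 2$ and $\delta_3=\delta_2/w_2$. Your telescoping identity is correct, but over the cycle $\mu_n$ visits $0,-\delta_3,-\delta_4,1-\delta_4,\dots$ and returns to $0$; it never settles to a single value, so a ``one-step balance'' mixing NDE, PS, CM, and AIMD in the same SI is ill-posed---those modules are mutually exclusive per SI, and which one fires depends on the zone, which itself cycles.

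Two smaller technical issues reinforce this. First, the recurrence $q_{n+1}\le q_n\cdot 2^{S(x_n)}$ is not what the dynamics give you: NDE multiplies the \emph{CWND}, and since $q=(w-BDP)/C$ the map on $q$ is affine, $q\mapsto (\alpha-1)\,mRTT+\alpha q$, not multiplicative. Second, $2^y\le 1+y\ln 2$ is the wrong direction (convexity gives $2^y\ge 1+y\ln 2$); the paper uses it only as a first-order Taylor \emph{approximation}, not an inequality. To recover the stated constant you will need to abandon the fixed-point idea and instead carry out the paper's step-by-step cycle analysis (or an equivalent case analysis over zones) so that the PS$\to$AIMD$\to$NDE sequence that produces $(\ln 4-1)/(2\,BDP)$ appears explicitly.
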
 
The detailed proof of the above Theorem and corresponding assumptions are discussed in Appendix~\ref{analysis}.


\color{black}




\begin{figure*}[t]
    \centering
    \begin{minipage}[b]{0.33\linewidth}
        \centering
         \includegraphics[width=0.97\linewidth]{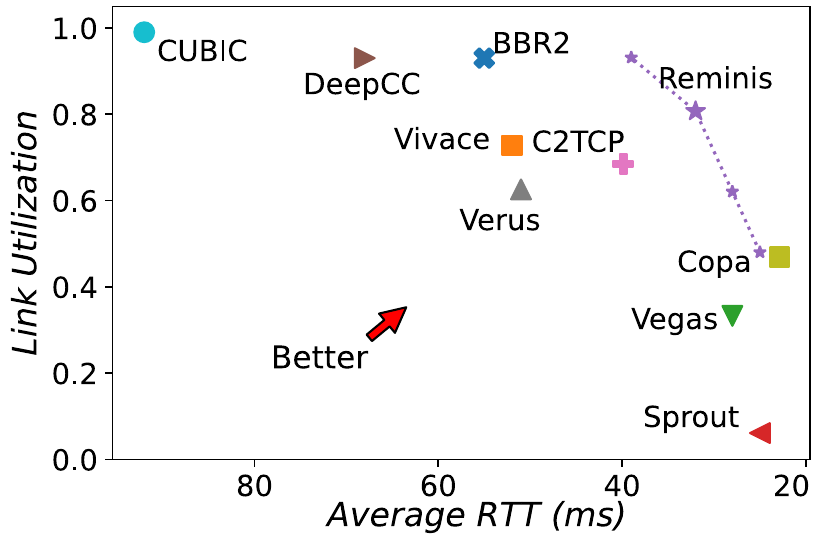}
        \label{fig_aws}
    \end{minipage}        
     \hfill
    \begin{minipage}[b]{0.33\linewidth}
        \centering
        \includegraphics[width=0.97\linewidth]{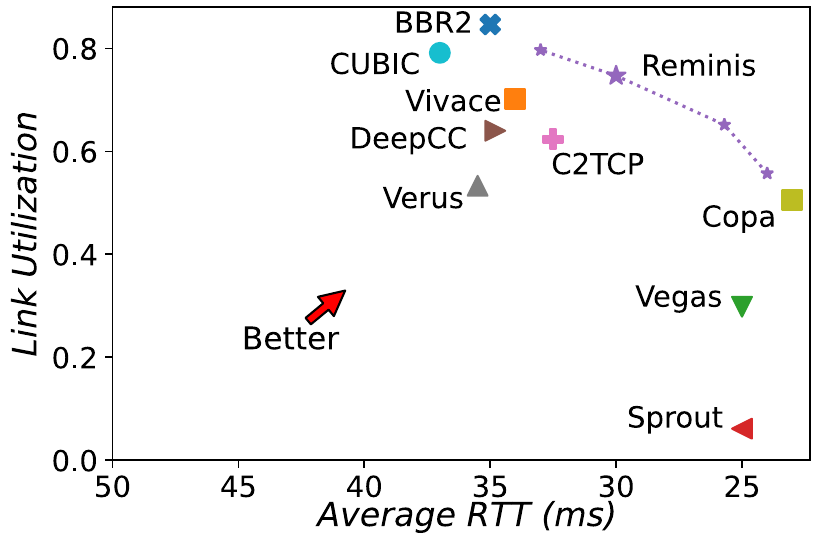}
        \label{fig_geni}
    \end{minipage}
    \hfill
    \begin{minipage}[b]{0.33\linewidth}
        \centering
        \includegraphics[width=0.97\linewidth]{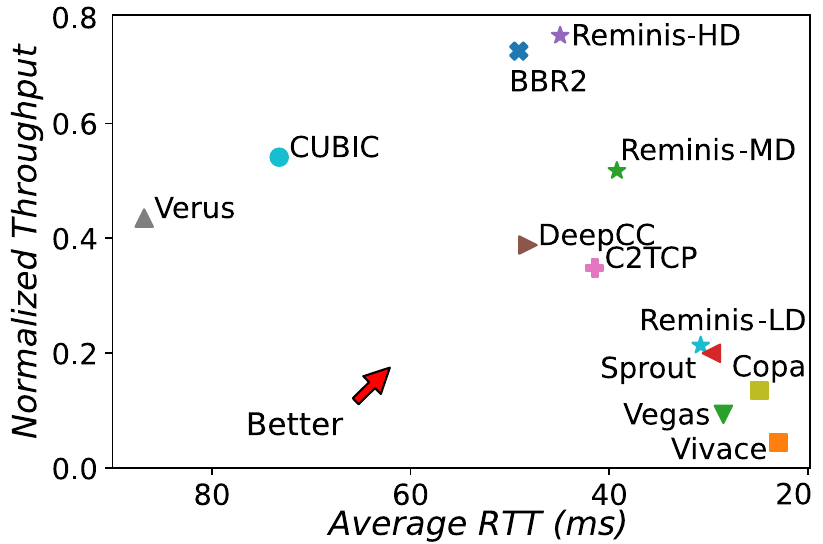}
        \label{fig_cell}
    \end{minipage}
    \hfill
    \begin{minipage}[b]{0.33\linewidth}
        \centering
         \includegraphics[width=0.97\linewidth]{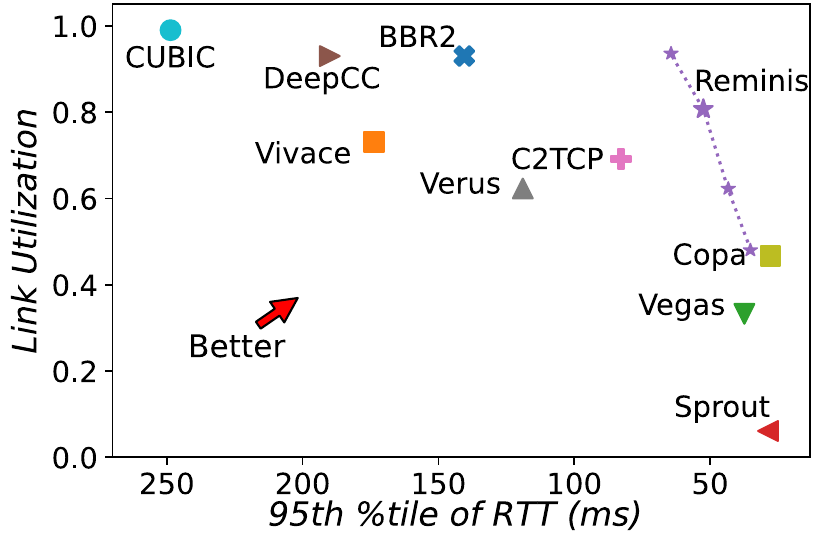}
        \label{fig_aws}
    \end{minipage}        
     \hfill
    \begin{minipage}[b]{0.33\linewidth}
        \centering
        \includegraphics[width=0.97\linewidth]{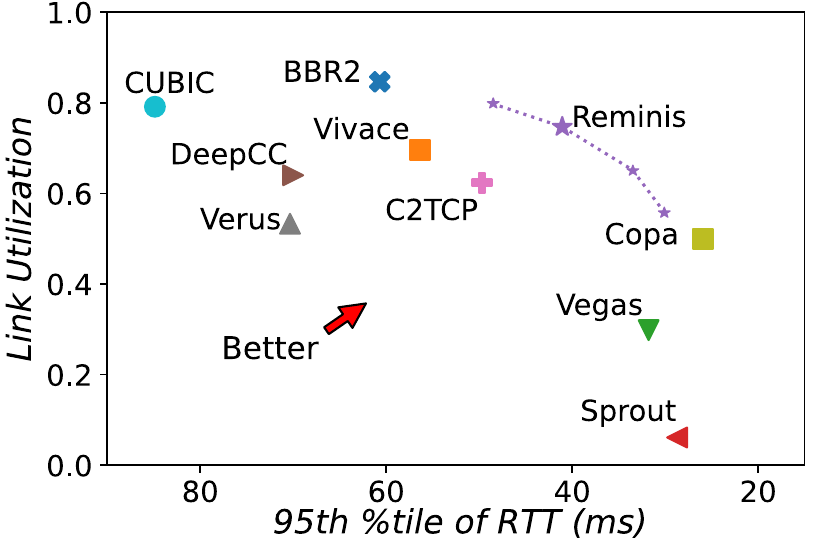}
        \label{fig_geni}
    \end{minipage}
    \hfill
    \begin{minipage}[b]{0.33\linewidth}
        \centering
        \includegraphics[width=0.97\linewidth]{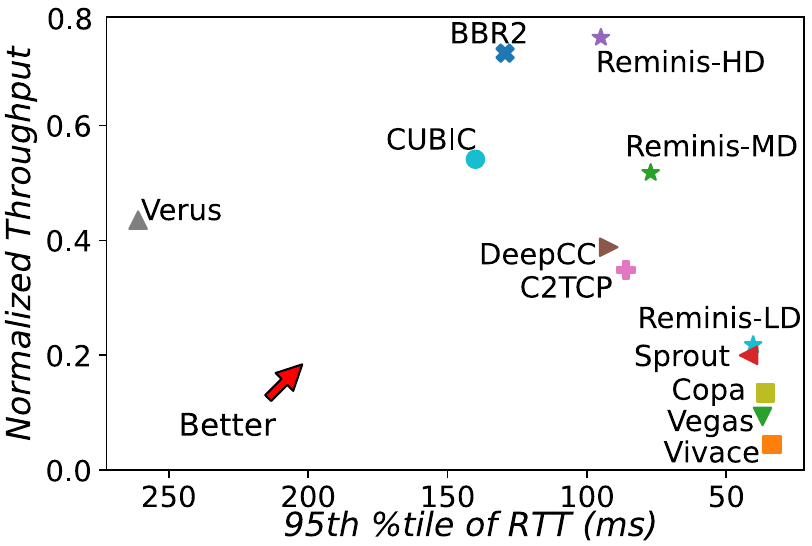}
        \label{fig_cell}
    \end{minipage}
        \caption{Throughput-Delay for SA (left column), NSA (middle column) and In-Field (right column) Emulations.}
      \label{fig_general}
\end{figure*}




\section{General Evaluations}
\label{sec:eval}
In this section, we extensively evaluate Reminis and compare it with other state-of-the-art e2e CC schemes in reproducible trace-based emulations and in-field experiments. The emulations help us to measure Reminis' performance over various scenarios, whereas the infield tests help us verify Reminis' performance in a much more complex real-world network.

\textbf{Metrics}: The main metrics used in this paper, suitable for a real-time application, are the average throughput (or equivalently link utilization) and delay-based statistics such as average and 95th percentile packet delays. \label{metrics}

\textbf{Compared CC Schemes}: We compare Reminis with different classes of state-of-the-art e2e CC schemes. The first class is general-purpose CC schemes such as TCP CUBIC \cite{cubic}, Google's BBR2 \cite{bbr2}, TCP Vegas \cite{vegas}, and Copa \cite{copa}. The second class is CC algorithms that are custom designed for cellular networks. These schemes are C2TCP \cite{c2tcp}, Verus \cite{verus}, and Sprout \cite{sprout}. The final class is learning-based CC schemes. We compare Reminis with DeepCC \cite{deepcc}, targeting cellular networks and PCC-Vivace \cite{vivace} as a general-purpose learning-based CC scheme.

\subsection{Trace-based Emulations}
\label{sec:eval:emulation}
\textbf{Mahimahi Limitation for High Bandwidth Links}: Mahimahi \cite{mahi} was not originally designed to emulate high-speed links. That has led to some design decisions that downgrade its performance in large bandwidth scenarios such as 5G links. We faced these performance issues during the evaluation of Reminis. So, we pinpointed Mahamahi's issues and revised them to support high BDP emulations. In a nutshell, we updated the TUN/TAP settings and logging functionalities of Mahimahi. 
Fig. \ref{appendix:mahi} shows the performance of TCP CUBIC over a 5G link emulated with Mahimahi before and after our changes. As it is clear, after our modifications, Mahimahi is not the performance bottleneck and CUBIC can utilize the link fully. For brevity and sake of space, we omit the details of the changes and refer interested readers to our publicly available source code including these modifications along with the Reminis source code~\cite{remi_git_repo}.

\noindent\textbf{Setup}:
\label{sec:eval:setting}
We use trace-driven emulations to evaluate Reminis and compare it with other CC schemes under reproducible network conditions. We use our patched Mahimahi as the emulator and the 5G traces collected by prior work ~\cite{narayanan2020lumos5g} as our base network traces.
After patching Mahimahi, we evaluate the general performance of Reminis and other relevant CC schemes. For these experiments, we use 60 different 5G traces gathered in North America by prior work \cite{narayanan2020lumos5g} in various scenarios\footnote{For more details about the traces, see Appendix~\ref{appendix:traces} and discussions therein.}. Each run is set to be 3 minutes and we repeat each run 3 times. For these experiments, we fix the 
minimum intrinsic RTT \color{black} of the network to 20ms based on prior measurements done by \cite{xu2020understanding}. 
Furthermore, since currently there are two different deployments of 5G networks, we consider two different settings for bottleneck bandwidth size.
The first deployment is the Non-Standalone (NSA) mode, where operators are reusing the legacy 4G infrastructure to reduce costs. In NSA mode, we expect to have 4G-tuned buffers which would be smaller than the 5G-tuned buffers. For NSA mode, based on measurements done by prior work \cite{xu2020understanding}, we set the buffer size to 800 packets. The second deployment is the Standalone (SA) version, where the infrastructure of the network is also changed to acknowledge 5G networks' needs. In this case, we configure the buffer size to 3200 packets \cite{xu2020understanding}.

\textbf{Standalone (SA) Scenario}: The left column of Fig. \ref{fig_general} shows the performance of tested CC schemes in the SA experiment. The dashed curve shows Reminis' performance with different DTT values. The star with a bigger marker size than other stars is Reminis with its default DTT parameter (i.e. $DTT=1.5 \times mRTT$). Later in Section~\ref{DTT_effect}, we investigate the sensitivity of Reminis to this parameter. Considering either average or 95 percentile delay statistics, pure loss-based CC schemes like CUBIC suffer from high e2e delay, though they can fully utilize the link. This behavior is expected from these schemes as they try to fully occupy the bottleneck buffer. In contrast, Reminis can find a sweet spot in the delay-throughput trade-off. For instance, averaged over all runs and all traces, Reminis with default DTT, achieves 5$\times$ lower 95th percentile delay compared to CUBIC. This promising performance comes with only compromising 20\% of CUBIC's link utilization. Increasing DTT value to $2\times mRTT$ (making Reminis more throughput hungry), Reminis, on average, achieves $2.2\times$ lower 95th percentile delay than BBR2, while having the same link utilization.\\
Moreover, SA-related parts of Fig. \ref{fig_general} show that delay-based CC algorithms like Vegas and Copa cannot get an acceptable link utilization. For example, averaged over all the experiments, default Reminis compared to Vegas gains roughly 2.42$\times$ more throughput while its 95th percentile of delay and average delay are only 1.4$\times$ and 1.14$\times$ more than Vegas respectively. One of the main takeaways from SA scenario experiments is that because of deep buffers, throughput-hungry schemes like CUBIC or BBR2 can fully utilize the link but at the same time, they will have dire delay performance. On the other hand, Reminis, using its Proactive Slowdown and Catastrophe Mitigation modules can control the delay. However, without the Non-Deterministic Exploration module, Reminis would be hindered like other delay-based schemes. Using these modules simultaneously enables Reminis to reach the sweet spot of the delay-throughput trade-off. \par
\textbf{Non-Standalone (NSA) Scenario}: The middle column of Fig. \ref{fig_general}, compares the overall performance of all gathered CC schemes in the NSA scenario. An important note here is that because of the small size of the buffer, even throughput-hungry schemes, cannot fully utilize the link. The highest utilization, in this case, is roughly 80\% achieved by BBR2 and CUBIC. In this scenario, Reminis operated on a sweet spot in the delay-throughput trade-off curve. In particular, having roughly the same link utilization as CUBIC, default Reminis achieves 2$\times$ lower 95th percentile delay than CUBIC. The relative performance of the investigated CC schemes are same as the SA scenario and the only difference is the reduction in delay and link utilization, in general among all CC schemes, as a result of the small buffer size.

\subsection{In-Field Evaluations}
Real-world cellular networks can be more complicated than emulated versions due to the existence of other users, different behavior of cellular base-station packet schedulers, etc. We tested the performance of Reminis over deployed 5G networks
in North America. Having servers as senders, a 5G sim card, and a 5G phone as a client, we collected the performance of various CC schemes under different environments with different dynamics. We used Samsung Galaxy S20 5G as our 5G mobile phone.
The mRTT of the 5G network in our in-field experiments varied from 20ms to 30ms. Overall, we conducted 80 experiments for each CC scheme where each run takes 15 seconds. Experiments are done at different times and places to capture various network dynamics. 
During in-field evaluations, the mobile phone was in both stationary and walking conditions. In both conditions, we observed the time-varying throughput that Reminis targets. In the stationary scenarios, two main reasons cause significant changes in access link capacity over time. The first reason is the changes in line of sight (LoS). Even small obstructions like a human body could trigger 5G-to-4G handoffs and lead to significant performance degradation \cite{LoS}. Second, the 5G wireless scheduler, based on different reasons such as the history of users' resource utilization and the number of current existing users, can enforce different available access link capacities per user. This can cause considerable changes in available link capacity observed by the end user even in the stationary scenario. In a 5G context, demand for resources in different slices can vary arbitrarily and unpredictably. In such cases, a large block of network resources might suddenly be available or taken away from a slice servicing a set of mobile broadband applications \cite{haile2021end}.
\color{black}

For every experiment, the throughput of schemes is normalized to the maximum throughput gained in that specific scenario. In these experiments, we use three versions of Reminis named Reminis high-delay(HD), medium-delay (MD), and low-delay (LD) corresponding to DTT values of 60 ms, 40 ms, and 30 ms, respectively.\\
The right column of Fig.~\ref{fig_general} shows the high-level results for our in-field tests. Reminis-HD achieves the same throughput as BBR2 while achieving on average 1.47$\times$ lower 95th percentile delay. Moreover, Reminis-HD can increase the throughput by 1.34 $\times$ and reduce the 95th percentile RTT  by 1.4 $\times$ compared to TCP CUBIC. 
With a tighter DTT, Reminis-MD can achieve the same throughput as CUBIC while having 1.7$\times$ lower 95th percentile of delay. The results of in-field evaluations are pretty close to NSA scenario emulations, which can verify our assumptions and results for NSA emulations.


\subsection{MEC-Flavored Emulations}


\begin{figure*}[!t]
    \centering
    \begin{minipage}[b]{0.66\linewidth}
        \begin{minipage}[b]{0.49\linewidth}
            \centering            \includegraphics[width=\linewidth,height=1.7in]{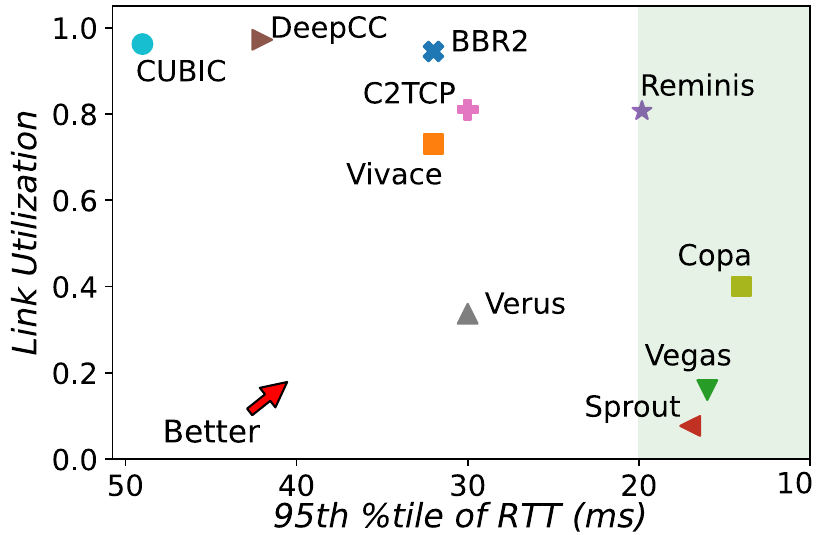}
            \caption{MEC-Flavoured Exp.}
            \label{mec_95}
        \end{minipage}
        \begin{minipage}[b]{0.49\linewidth}
            \centering                  \includegraphics[width=\linewidth, height=1.7in]{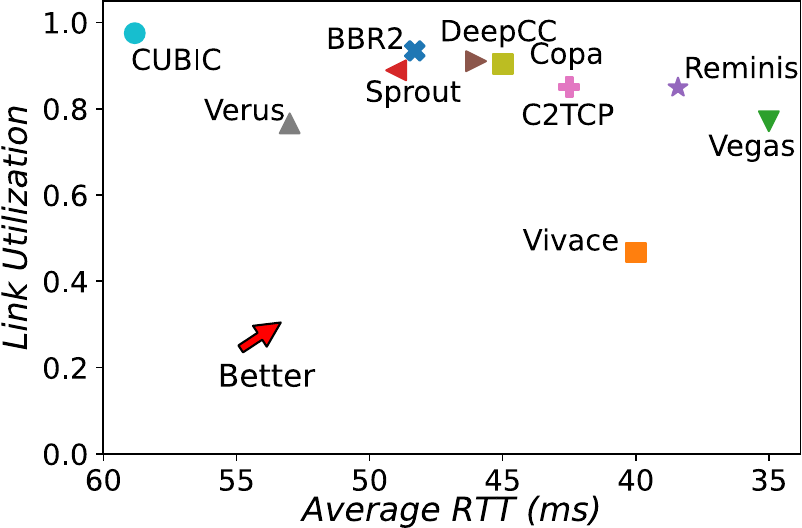}
            \caption{Experiments on 4G Traces}
            \label{fig:eval:4G}
        \end{minipage}
        \end{minipage}
    \begin{minipage}[b]{0.33\linewidth}
       \centering
        \includegraphics[width=\linewidth, height=1.7in]{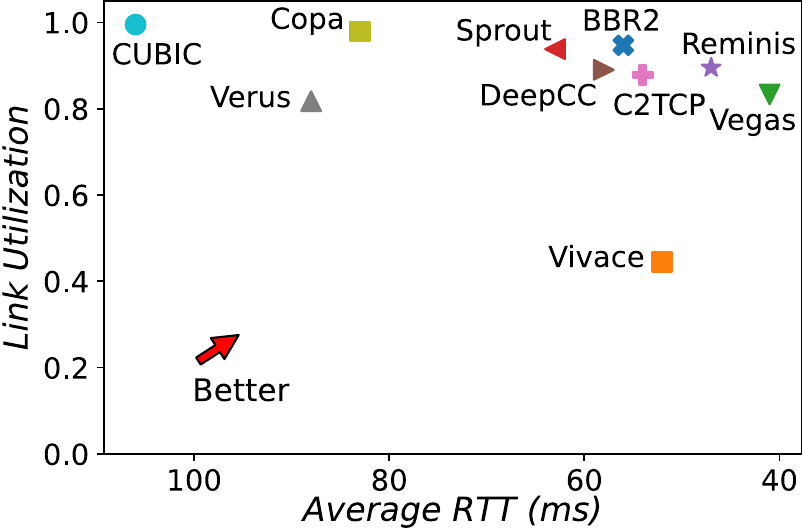}
         \caption{Experiments on 3G Traces}
        \label{fig:eval:3G}
    \end{minipage}
\end{figure*}

\label{sec::MEC}
As mentioned in Section~\ref{sec:intro:5g_is_diff}, mobile edge computing integrated with 5G is one of the design opportunities in 5G networks. 5G aims to support under 10 ms latencies using the New Radio technology. With supporting low-latency connections, applications such as AR/VR are envisioned to be functional in the 5G environment \cite{xu2020understanding}. Here, we emulate a 5G MEC scenario and investigate the behavior of different CC schemes in this scenario. To this end, we set the intrinsic RTT of the network to 10 ms and assume a VR application with a delay constraint of 20 ms is running. Other than this change in intrinsic RTT, we fix the setting to be representative of the SA scenarios. \par
Fig. \ref{mec_95} shows the overall performance of Reminis and other CC schemes in this experiment. As shown, only four schemes can achieve the required latency desired by the AR application (the green area in the figure). However, with the help of the NDE module, Reminis achieves at least 2.4$\times$ more throughput than the other three schemes. This promising performance highlights the benefits of the design decisions of Reminis. Generally, the Non-Deterministic Exploration module helps Reminis to achieve high throughput while the Proactive Slowdown and Catastrophe Mitigation modules help Reminis satisfy its delay target. Appendix Section \ref{app::MEC} gives information about the average delay performance of these CC algorithms in this experiment.

\subsection{Is Reminis Only Good in 5G Networks?}
\label{3G/4G}
    








Here, we show that mechanisms utilized by Reminis are effective to maintain the e2e delay low and adapt to the channel capacity variations not only in 5G cellular networks but also in other networks such as 3G and 4G cellular networks. 

To that end, we use various 3G and 4G traces (gathered respectively by \cite{c2tcp} and \cite{deepcc}) and evaluate the performance of different schemes. A few samples of these traces are shown in Section \ref{fig::trace_sample}. Fig.~\ref{fig:eval:4G} and ~\ref{fig:eval:3G} show the results of these evaluations.\footnote{More results on the improvements of 95th percentile of delay in these emulations are explained in appendix Section \ref{app::3G/4G}.}
There are two important remarks here. 
First, Reminis still performs very well on both 3G and 4G scenarios. For instance, compared to BBR2, on 4G and 3G traces, Reminis achieves 1.48$\times$ and 1.33 $\times$ lower average queuing delay respectively, while BBR2's throughput is only 1.1 $\times$ and 1.05 $\times$ more than Reminis.
Second, the performance gap between other CC schemes and Reminis in 3G and 4G scenarios is smaller compared to the 5G scenarios. The main reason for that is the fact that 5G networks have an order of magnitude larger BDP, deeper buffers, and more volatile access links compared to 3G and 4G networks. This means in the 5G setting, wrong actions of a CC scheme have higher chances of being penalized more and manifest in performance issues. 
\color{black}
\section{Deep Dive Evaluations}
\label{sec::mircro}
In this section, we will look under the hood and investigate the dynamics of Reminis and the role of its individual components. We will also investigate the impact of different parameters such as intrinsic RTT, buffer size, and DTT on Reminis. Finally, we will end this section by examining Reminis' fairness and overhead aspects.

\begin{figure*}
    \begin{minipage}[b]{0.24\linewidth}          
        \centering
        \includegraphics[width=0.95\linewidth, height=1.5in]{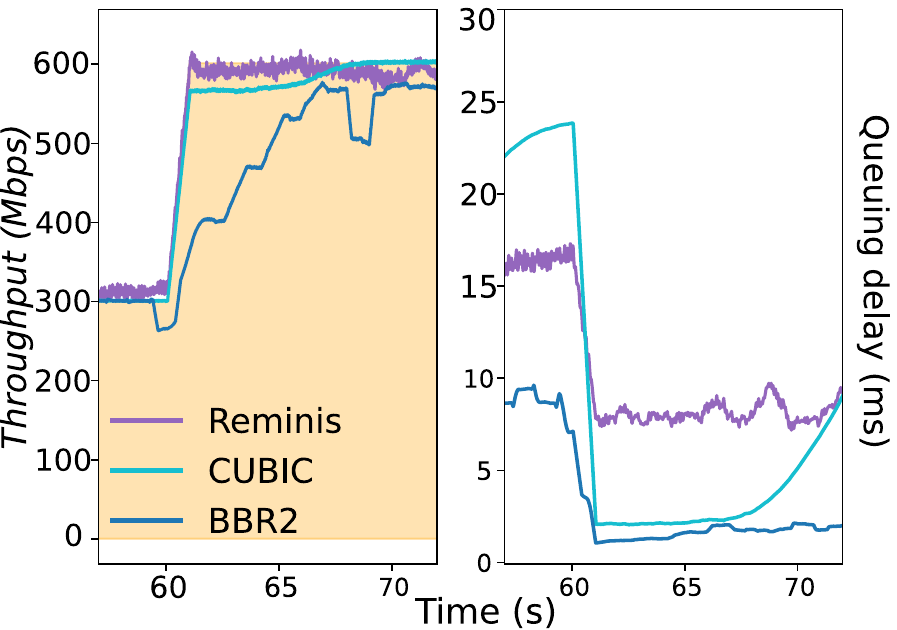}
        \subcaption{Step-Up}
        \label{fig::step_up}
    \end{minipage}
    \begin{minipage}[b]{0.24\linewidth}        
        \centering
        \includegraphics[width=0.95\linewidth, height=1.5in]{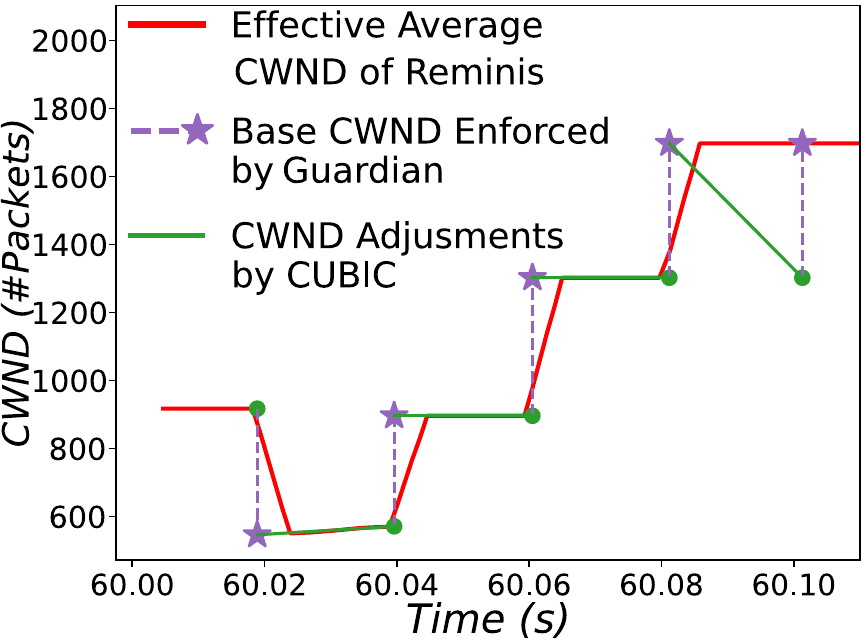}
        \subcaption{Congestion Win.}
        \label{cwnd_trans_up}
    \end{minipage}
    \begin{minipage}[b]{0.24\linewidth}        
        \centering
        \includegraphics[width=0.95\linewidth, height=1.5in]{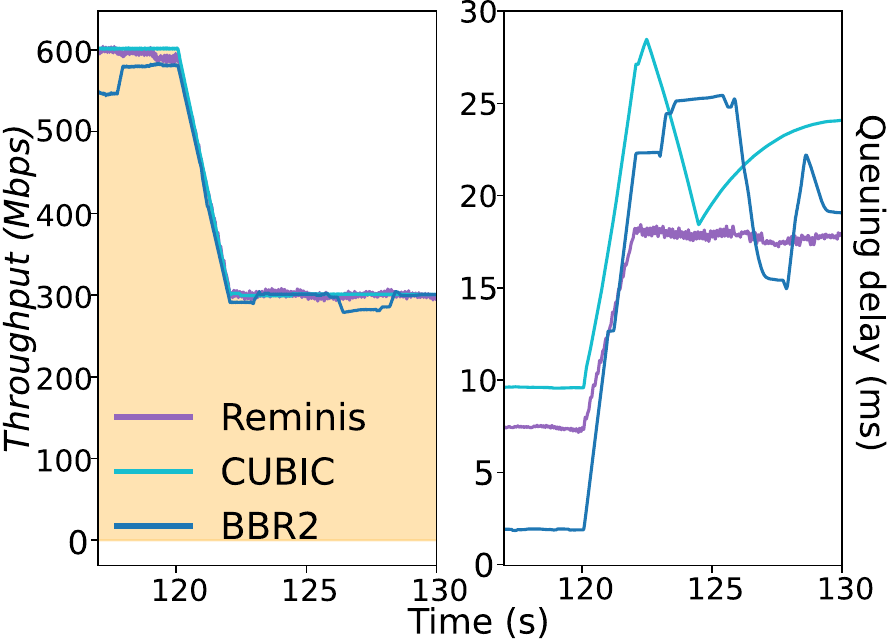}
         \subcaption{Step-Down}
        \label{step_down}
    \end{minipage}
    \begin{minipage}[b]{0.24\linewidth}        
        \centering
        \includegraphics[width=0.95\linewidth, height=1.5in]{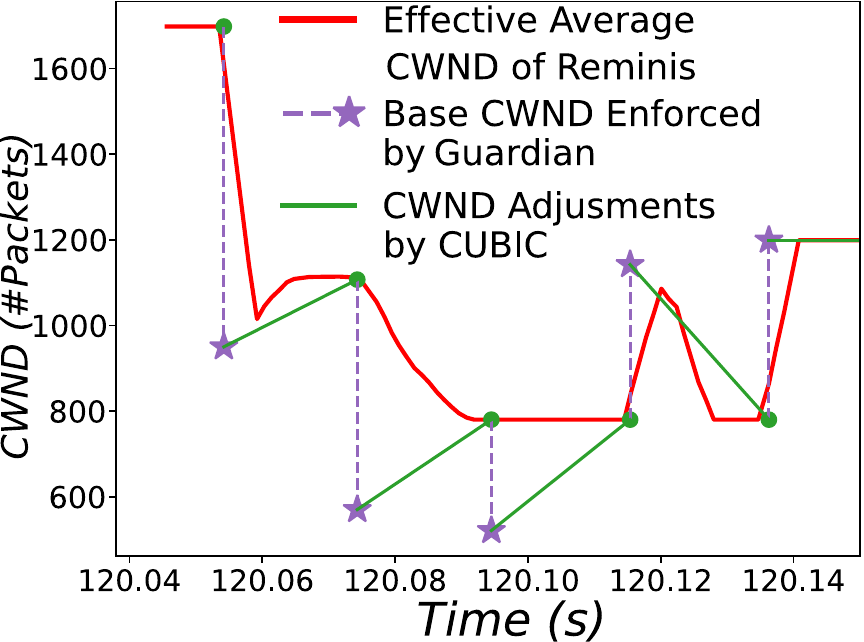}
         \subcaption{Congestion Win.}
        \label{cwnd_trans_down}
    \end{minipage}
    \caption{Dynamics of Reminis in a Step Scenario}
    \label{step}
\end{figure*}
\subsection{Dynamics of Reminis}
For showing the dynamics of Reminis, we use simple scenarios to depict the underlying actions of different blocks in Reminis. To put it in context, we also illustrate the performance of CUBIC and BBR2 here.  \par
\textbf{Reminis Response to Changes in Link Bandwidth:} 
To elaborate on the dynamics of Reminis, we use two different step scenarios in which we suddenly decrease/increase the link bandwidth. Probing the behavior of Reminis in a simple step scenario will help us understand Reminis responds to more complex traces as the channel capacity can be modeled as a summation of shifted step functions. \par
Fig. \ref{step} shows Reminis sending rate, queuing delay, and the CWND over a link with capacity changing from 300 Mbps to 600 Mbps and vice versa. The intrinsic RTT of the experiment is 20 ms and DTT is 40 ms.

In Fig. \ref{fig::step_up} when the link capacity increases, CUBIC and BBR2 are very slow to utilize this change. It takes them a few seconds to increase their sending rate to a point where they can utilize the link. Reminis, however, by inferring Zone 1, increases the CWND very fast. This increase in CWND value helps Reminis to fully utilize the link bandwidth shortly after the change in capacity. The Proactive Slowdown module is helpful here to stop Reminis from increasing the CWND too much. Fig. \ref{cwnd_trans_up} shows the Reminis CWND value during the increase in link capacity. This figure shows that the Non-Deterministic Exploration module increases the CWND at each SI and helps Reminis to adapt to this change quickly unlike BBR2 and CUBIC.\par
Fig. \ref{step_down} shows the performance of Reminis when link bandwidth decreases from 600 Mbps to 300 Mbps. This figure shows that CUBIC and BBR2 suffer from a surge in their delay while Reminis can control the delay increase so it will be still less than DTT. All the depicted CC schemes in Fig. \ref{step_down} adapt their sending rate to link bandwidth quickly but BBR2 and CUBIC have already occupied the queue so much that when the link capacity gets halved, they will experience a substantial surge in delay. Reminis is quick enough to adapt to the new scenario so the e2e delay does not exceed the delay target. Reminis will infer Zone 2 and 3, and as explained in Algorithm \ref{alg:1}, it will start decreasing the CWND.\\
Fig. \ref{cwnd_trans_down} shows changes in CWND with higher granularity. During this time, the Proactive Slowdown and the Catastrophe Mitigation modules decrease the CWND to match the new link capacity. After decreasing the CWND, if Reminis has decreased the window too much, the Non-Deterministic Exploration module will be activated (the last two samples of Fig. \ref{cwnd_trans_down}) and increases the CWND. Moreover, looking at the CWND adjustments done by the AIMD module in between each SI, it is obvious that in the first 3 SIs, as the AIMD module does not detect any packet loss, it keeps increasing the CWND between each SI.\\
\subsection{Impact of AIMD Block}
\label{micro::aimd}
Here, we investigate the answer to the question of why Reminis accompanies the Guardian block with a simple AIMD module. To observe the impact of the classic AIMD module, we do a simple ablation study and remove the AIMD block from Reminis. Using this new implementation of Reminis, we repeat all experiments in the SA scenario as described in Section~\ref{sec:eval:emulation} and gather the new version's overall throughput and delay performance. The results show that removing the AIMD block from Reminis leads to losing, on average, 15\% (and up to 30\%) link utilization without any tangible improvements in delay performance. To give more intuition about the effect of this module, Fig. \ref{AIMD_fig} shows a slice of a sample 5G trace. This figure illustrates how without the AIMD block, Reminis fails to keep up with increases in the access link capacity. These results demonstrate the role of the AIMD block in Reminis. The main intuition here is that since the Guardian works periodically (one action per SI), there are scenarios in which the Guardian can still miss the channel dynamics happening during one SI as 5G links can change on very small time scales. That is where an AIMD block comes into play. A simple AIMD block with its Ack-triggered logic adds fine-grained dynamics to the Reminis' actions and enables it to be more agile. 

Another benefit of the AIMD module is increasing the number of RTT samples during each SI, used to calculate the delay at each SI. Increasing the number of samples helps Reminis to get more reliable average statistics during each SI as averaging among more RTT samples reduces the measurement noise. In short, by providing more samples, AIMD helps NCI to have a better view of the network condition. 
\begin{figure}[!t]
    \centering
    \label{AIMD}
    \includegraphics[width=\linewidth]{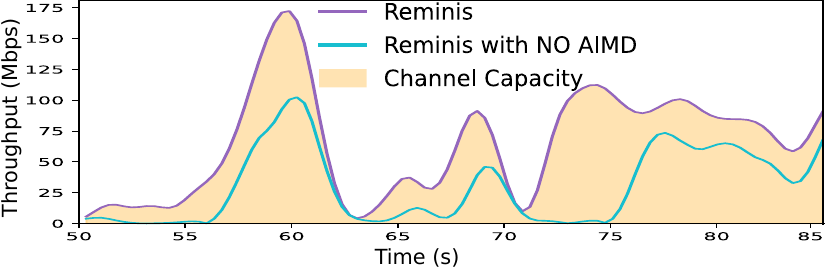}
     \caption{Impact of AIMD Block}
     \label{AIMD_fig}
\end{figure}


\begin{figure*}[!t]
    \centering
    \begin{minipage}[b]{0.33\linewidth}
        \centering
         \includegraphics[width=0.97\linewidth]{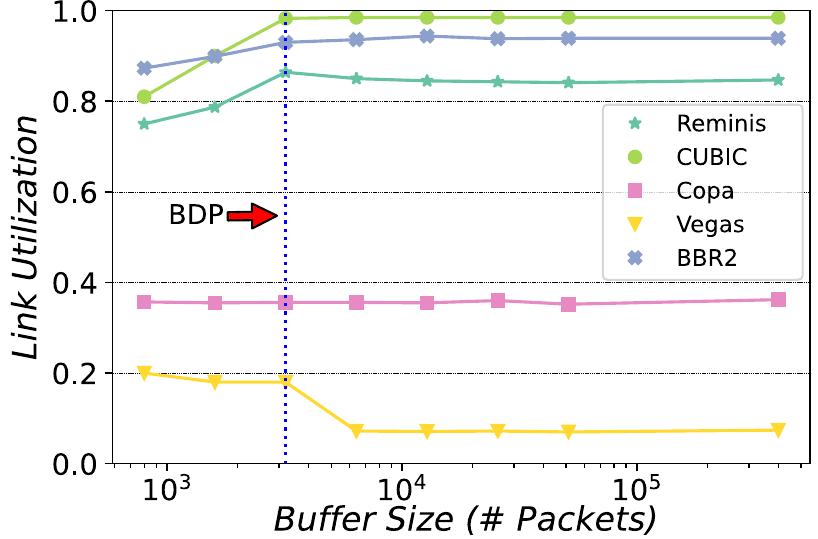}
        \caption{\label{buf-util} Buffer Size v. Throughput}
    \end{minipage}        
     \hfill
    \begin{minipage}[b]{0.33\linewidth}
        \centering
        \includegraphics[width=0.97\linewidth,height=1.45in]{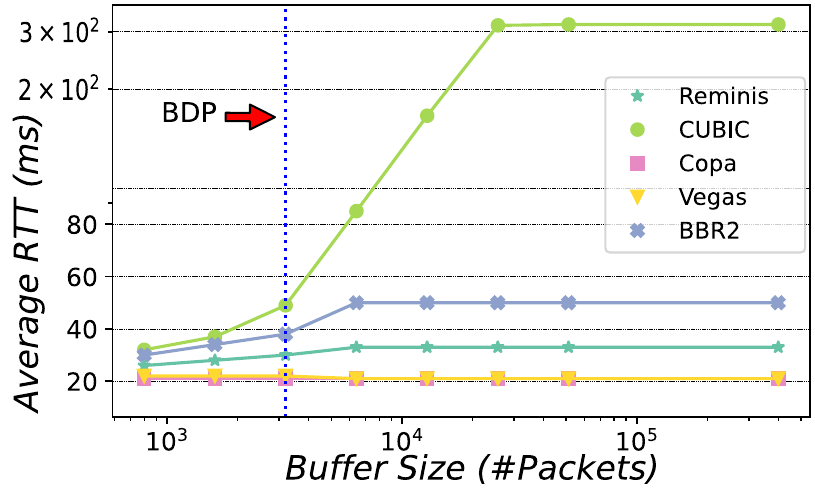}
        \caption{\label{buf-delay} Buffer Size v. Delay}
        \label{fig_geni}
    \end{minipage}
    \hfill
    \begin{minipage}[b]{0.33\linewidth}
        \centering
        \includegraphics[width=0.97\linewidth]{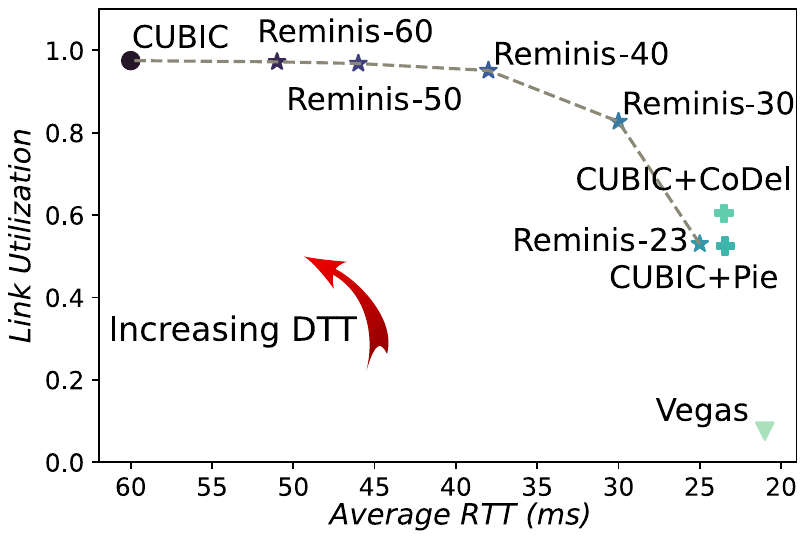}
        \caption{Impact of DTT}
        \label{DTT} 
    \end{minipage}

\end{figure*}

\subsection{Impact of Buffer Size}
One of the main characteristics of cellular networks is having per-user buffers at the base station. This helps the network to reduce the number of dropped packets and to offer a more reliable network to the users. Having separate queues for each user means that users don't compete over a common queue. This feature, despite mitigating some issues like fairness between multiple users' flows, leads to the well-known problem of bufferbloat \cite{bufferbloat2} and self-inflicted delay \cite{sprout}.\par
In this section, to measure the impact of different bottleneck buffer sizes on the performance of different schemes, we change the buffer size of the emulated network from 800 packets to 51200 packets. The choice of the lowest buffer size, 800 packets, is based on the findings of prior work \cite{xu2020understanding} regarding the buffer size of the NSA-5G network.\\
As expected and shown in Fig. \ref{buf-delay}, CUBIC tries to occupy all available buffer. This approach means that with increasing the buffer size, CUBIC's delay performance degrades. In contrast, the average delay performance of Reminis is roughly independent of buffer size and is around the value of DTT (30 ms). This behavior from Reminis is rather expected as it tries to control the CWND so that the overall delay meets the DTT requirement. Moreover, Fig. \ref{buf-util} shows that despite keeping the delay constant, Reminis achieves around 80\% link utilization regardless of the underlying buffer size.\par
NSA-5G networks have smaller buffers (4G buffers), so as long as there are still NSA base stations in the network, any proposed 5G CC algorithm should be able to also have a good performance in low buffer settings. Meeting its DTT, Reminis, can achieve 80\% link utilization in NSA buffer size.

\subsection{Impact of Delay Tolerance Threshold}
\label{DTT_effect}
Delay Tolerance Threshold (DTT) is a key parameter in Reminis' design and performance. Reminis will become more conservative when the measured delay exceeds DTT, so we expect to have a trade-off between delay and link utilization based on different values of DTT. Large DTT values steer Reminis toward being more throughput-oriented, while small DTT values guide Reminis toward being more delay-oriented. In Fig. \ref{DTT}, Reminis-X means a version of Reminis where the DTT parameter has been set to X ms.

For comparison, TCP Vegas, one of the major delay-oriented CC schemes, and TCP CUBIC, the most throughput-hungry CC scheme, have been added to Fig. \ref{DTT}. In addition, we accompany CUBIC with two active queue management (AQM) schemes, CoDel \cite{codel} and Pie \cite{pie}. Although these schemes make changes in the network, they still cannot utilize the link more than 60\%, which shows a major drawback for these AQM schemes on 5G links.
Fig. \ref{DTT} shows that DTT has the expected impact on the performance of DTT. With a larger DTT, we can guide Reminis to be more throughput-hungry and a smaller DTT makes Reminis more delay-sensitive. One salient point in this experiment is that Reminis does not compromise an immense amount of throughput to meet its DTT. For instance, for DTT=30 ms, Reminis achieves its goal while its link utilization is only reduced by 20\% compared to CUBIC which has an average RTT equal to 60 ms.
\begin{figure*}[!t]
    \begin{minipage}[b]{0.33\linewidth}          
        \centering
        \includegraphics[width=0.85\linewidth, height=1in]{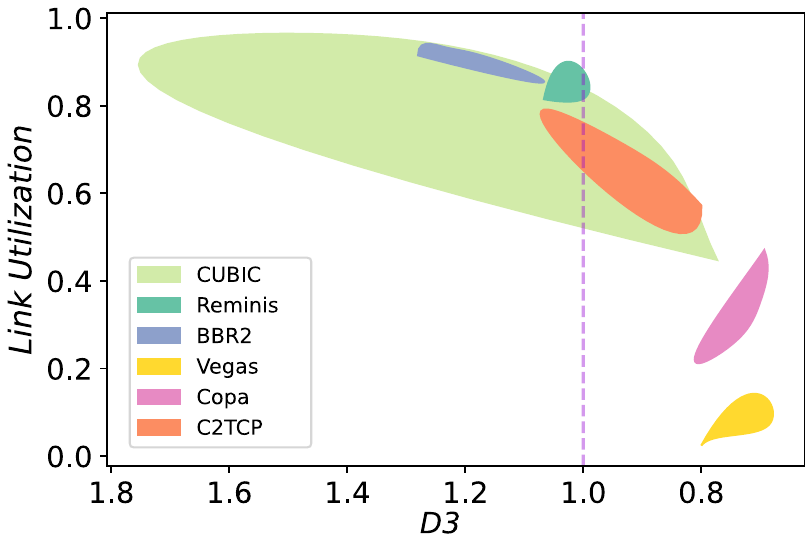}
        \caption{Impact of min. delay}
        \label{mRTT}
    \end{minipage}
    \begin{minipage}[b]{0.33\linewidth}        
        \centering
        \includegraphics[width=0.85\linewidth, height=1in]{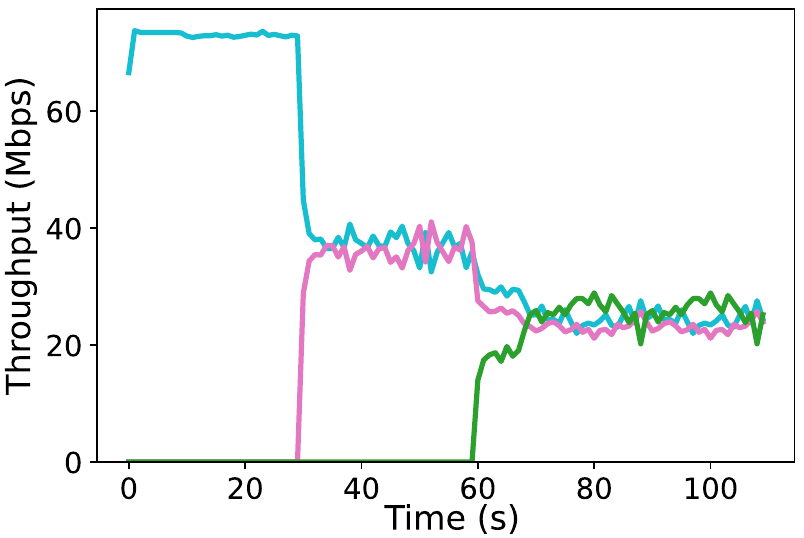}
        \caption{Reminis' Fairness}
        \label{fairness}
    \end{minipage}
    \begin{minipage}[b]{0.33\linewidth}        
        \centering
        \includegraphics[width=0.85\linewidth, height=1in]{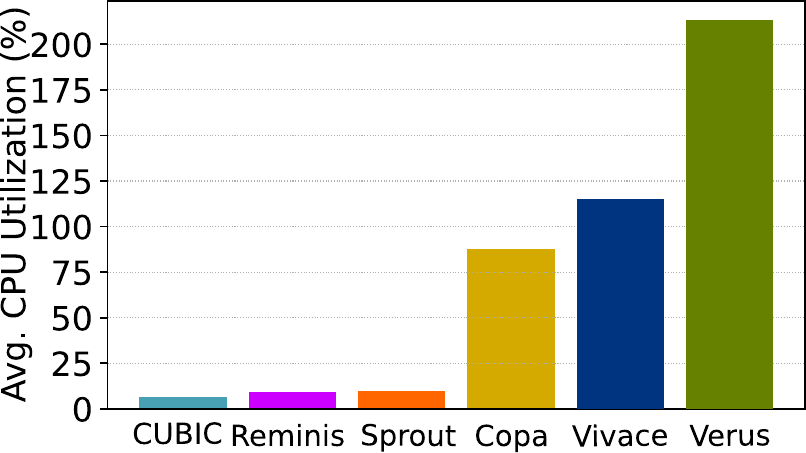}
        \caption{CPU Utilization}
        \label{cpu_util}
    \end{minipage}
\end{figure*}

\subsection{Impact of Network's Intrinsic RTT}
The intrinsic RTT of a network is a function of different things including the UE-Server distance \cite{narayanan2021variegated}. Therefore, in this experiment, we evaluate the performance of different schemes for networks with different intrinsic RTTs. In particular, we change the intrinsic RTT values of our emulated networks to $\{5,10,20,30,40,50\}\:ms$ and set corresponding DTT values to $\{7.5, 15, 30, 45, 60, 75\}\:ms$. 
We assume a SA version and consequently, set the buffer size to BDP for each tested intrinsic RTT value. 
\footnote{Note that Reminis automatically adjusts the value of SI to the observed mRTT of the network. That means, by design, Reminis utilizes different SI values for different settings.}
\color{black}
We define deviation from the desired delay (D3) parameter as $D3 = \frac{d}{DTT}$. 
D3 index indicates how much the average delay ($d$) is larger/smaller than DTT with D3=1 meaning the average delay has met DTT.
\par
Fig. \ref{mRTT} depicts the convex hull of each compared CC scheme in all the possible scenarios based on different values of intrinsic RTT. For each intrinsic RTT, we repeat the experiment 3 times. As Fig.~\ref{mRTT} shows, Reminis has a D3 less than 1.03 and roughly 80\% link utilization over the tested range of intrinsic RTTs. On the other hand, delay-based approaches like Copa or Vegas, show poor utilization performances. For instance, Copa's link utilization can go down to 20\%. Moreover, C2TCP can achieve a good D3, but its utilization can vary widely over different tested intrinsic RTTs which is not desirable. For BBR2, D3, and for CUBIC both D3 and link utilization have high variations based on the value of mRTT which is not a desirable feature for cellular networks.
\subsection{Fairness}
Here, we investigate the fairness property of Reminis. We have created a network containing servers and one client connected via an emulated bottleneck link. We set the intrinsic RTT of the network to 20 ms. We send three separate Reminis flows toward the client, with 30 seconds gaps between the start of each flow, and measure the throughput of each flow at the client side. Fig.~\ref{fairness} shows the result of this experiment and demonstrates that Reminis flows can fairly share the bottleneck. When the second Reminis flow enters the network, at around $t=30$s, the first flow detects Zone 2/3, hence it reduces its CWND. This will release enough bandwidth for the second flow and consequently, two flows can share the bottleneck bandwidth. The same happens when the third flow enters the network.

\subsection{CPU Overhead}
Considering the power constraints of devices in 5G networks, it is crystal clear that a successful 5G-tailored CC scheme should be lightweight with low computational overheads. That said, here we show the lightweight aspect of Reminis. To that end, 
\color{black}
 We measure the average CPU utilization of different CC schemes by sending traffic from a server to a client over a 720 Mbps emulated link for 2 minutes. The choice of 720 Mbps for link capacity comes from the average of 5G link capacity measured in prior work \cite{narayanan2020lumos5g}. Fig. \ref{cpu_util} shows the comparison between the average CPU utilization of different schemes. Even though the current version of the Guardian module is implemented in user space, Reminis shows a very good performance in terms of CPU utilization. In particular, CPU utilizations of Reminis and CUBIC (the default CC scheme implemented in Kernel) are around 9.2\% and 6\%, respectively. 
 Having low overhead is a consequence of keeping the Reminis design simple, employing simple delay statistics, and utilizing low-overhead ack-triggered (AIMD) actions.\color{black}

\section{Discussion}
\noindent{\textbf{Choice of DTT:}} If the application is delay sensitive and can provide Reminis a DTT parameter, Reminis can use this value. Reminis starts after the slow-start phase of the AIMD block and therefore, it will have enough samples to infer the mRTT of the network. However, if the application's requested DTT is less than mRTT, Reminis can detect this problem and enforce DTT to be larger than the measured mRTT of the network. If the applications do not provide a specific DTT, Reminis switches to its default where$DTT = 1.5mRTT$. \\
\noindent{\textbf{Does Reminis guarantee to always meet DTT requirements?}} A novel concern in 5G networks is the occasional 5G ``dead zones''. As explained in Section~\ref{sec:intro:5g_is_diff}, users entering these zones experience close or equal to zero link capacity which can last for seconds. As no packet is served from the queue during this time, the queuing delay will inevitably soar. Therefore, any e2e CC scheme by nature, including Reminis, can not control these types of scenarios. The bottom line is that although we showed Reminis is significantly better at controlling e2e delay in these scenarios than other state-of-the-art CC schemes, any e2e CC algorithm can only deliver QoS demands that are feasible in a network.\par
\noindent{\textbf{Limitations of Reminis:}} 
Reminis targets emerging applications with low/ultra-low latency requirements. Supporting such applications before anything requires networks with low/ultra-low intrinsic delays. This justifies and, indeed, encourages the use of edge-based architectures such as MEC. In such settings, competing with loss-based CC schemes like CUBIC, which fully utilize buffers, is less of a concern. That said, when these settings are not held and Reminis coexists with loss-based flows that fully occupy queues, similar to any CC scheme that attempts to control the e2e delay, it faces problems when the DTT value (and mRTT) is way lower than the queuing delay caused by loss-based flows.

\section{Conclusion}
In this work, we demonstrate that achieving high throughput and low controlled delay in highly variable and unpredictable 5G networks does not necessarily require convoluted learning-based schemes or prediction algorithms.
To that end, we introduce Reminis, a simple yet adaptable e2e CC design tailored for 5G networks with provable convergence properties, and we show that properly exploiting non-deterministic throughput exploration algorithms combined with proactive/reactive delay control mechanisms is sufficient to effectively adapt to 5G cellular networks and achieve high performance. 
Our controlled emulations and real-world experiments, show the success of Reminis in acquiring its design goals and demonstrate that Reminis can outperform state-of-the-art CC schemes on 5G networks while being deployment friendly with low overheads and not requiring any changes in cellular network devices.\color{black}

\bibliographystyle{ACM-Reference-Format}
\bibliography{main}

\begin{appendices}
\section{Related Works}
\label{sec:app:related}
\noindent\textbf{General Purpose CC Schemes:}
Since the early days of congestion control designs, many different schemes have been proposed. TCP Reno \cite{reno}, TCP NewReno \cite{newreno}, TCP CUBIC \cite{cubic} and BIC \cite{bic} are among important legacy loss-based CC schemes. Delay-based TCP variants like TCP Vegas \cite{vegas} were also introduced to use the delay as the main congestion signal. One of the basic assumptions of all these legacy CC protocols is to have a fixed-capacity link, which does not apply to cellular networks. Among recent works in this category, Copa \cite{copa} and LEDBAT \cite{ledbat} can be named. \par


\noindent\textbf{Environment-Aware CC schemes:} Many prior works focus on cellular networks specifically. For instance, Sprout \cite{sprout} employs packet arrival times combined with a probabilistic inference to make a cautious forecast of packet deliveries in cellular networks.
We found that Sprout is inefficient for 5G links and its model of the network path does not hold in 5G networks. We even tried to update its model by changing its hard-coded parameters, but even after that, sprout performed poorly in 5G networks. Another work is Verus~\cite{verus} which tries to make a delay profile of the network and then use it to calculate the CWND. However, as we showed in section~\ref{sec:eval}, Verus is not agile enough to adapt to 5G networks. 
Among more recent cellular-tailored works C2TCP~\cite{c2tcp} and ExLL~\cite{exll} can be named. C2TCP works on top of classic throughput-oriented TCP and accommodates various target delays. However, when running on 5G links, C2TCP cannot adapt itself to quick surges of 5G links leading to under utilization.
ABC \cite{abc2} 
uses routers' feedback to help senders adjust their CWND value in a wireless setting. However, the feasibility of deploying in-network feedback approaches that require either new devices or an update of the existing ones remains a major obstacle for deploying these schemes that are not fully e2e, in particular in cellular networks where network providers are not very flexible in terms of changing their equipment.
Another group of cellular-tailored CC schemes focuses on physical layer measurements like PBE-CC \cite{pbe}, CQIC \cite{cqic} and CLAW \cite{claw}. General concerns around these works are their power consumption and privacy as they need to access end hosts' physical layer information. Moreover, considering CU/DU splitting in 5G \cite{cu-du}, the applicability of these schemes should be justified.\par

\noindent\textbf{ML-based CC Schemes:}
Remy~\cite{remy} uses offline learning to map network events to CC-based actions in a brute-force manner. However, Remy's approach needs accurate assumptions about the network, which is not applicable to highly variable 5G networks. PCC-Vivace~\cite{vivace} leverages online learning to choose the best sending rates, but the online-learning approach used in Vivace makes it very slow to react to quick changes of capacity in 5G networks (e.g., see section~\ref{others_fail}).
Some other ML-based schemes target cellular  networks. DeepCC~\cite{deepcc} is a cellular-tailored plugin that utilizes deep reinforcement learning to control the maximum congesting window of an underlying scheme such as CUBIC. However, it cannot control e2e delay very well in a 5G setting (e.g., see section \ref{sec:eval}). The reason likely lies in the fact that DeepCC's model, which is trained over 4G links, cannot generalize well to unseen 5G networks. 
Besides the generalization issue, another concern with ML-based schemes is their interpretability. Although the goals of these models are clear, they are usually treated as black boxes. In contrast, heuristic models, such as Reminis, can be modeled and analyzed rigorously (as is done in our work). Therefore, debugging ML-based CC schemes to generalize them to new environments is difficult and might require making fundamental changes in their designs. Although some recent works like \cite{interpreting}  have focused on interpreting ML models used in the networking field, this line of work is in its early stages.

\begin{figure*}[!t]
    \centering
    \begin{minipage}[b]{.33\linewidth}
        \centering
        \includegraphics[width=0.97\linewidth]{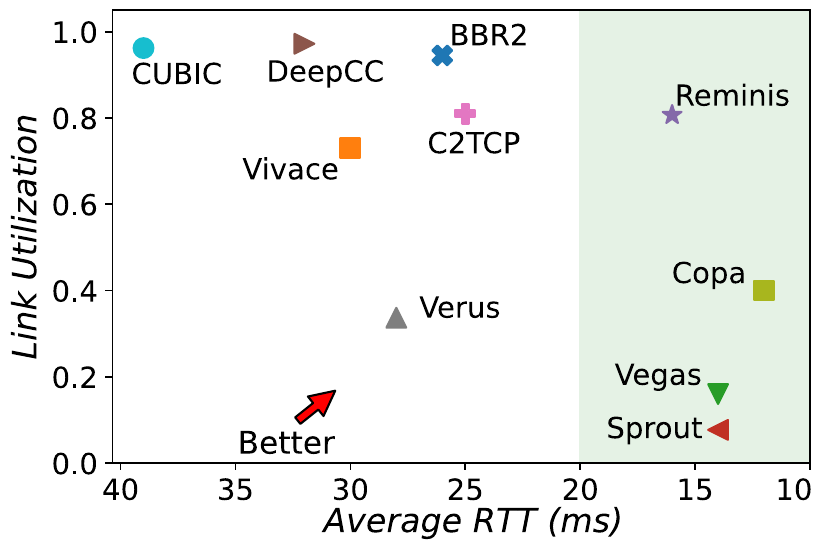}
        \caption{MEC Experiment}
        \label{mec_avg}
    \end{minipage}
    \begin{minipage}[b]{0.66\linewidth}
        \begin{minipage}[b]{0.49\linewidth}
            \centering
            \begin{center}
        \includegraphics[width=0.97\linewidth]{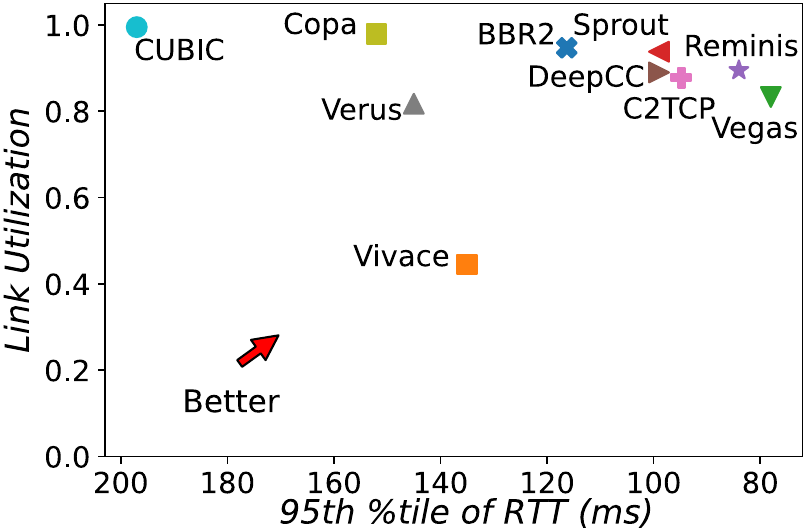}
        \end{center}
        \end{minipage}
        \hfill
        \begin{minipage}[b]{0.49\linewidth}
            \centering
            \includegraphics[width=0.97\linewidth]{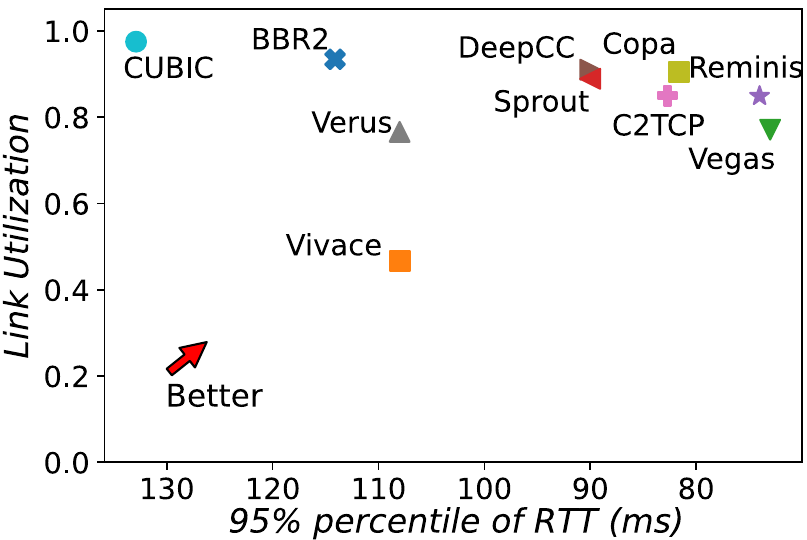}
        \end{minipage}
        \caption{Throughput-Delay Performance on 3G (left) and 4G (right) Traces}
        \label{3G_4G_eval_app}
    \end{minipage}
\end{figure*}
\section{MEC-Flavored Emulations}
\label{app::MEC}
As explained in Section \ref{sec::MEC}, mobile edge computing integrated with 5G is one of the design opportunities in 5G networks. 5G aims to support under 10 ms latencies using the New Radio technology so it can support applications like AR/VR. Our emulations in Section \ref{sec::MEC} show that Reminis can meet delay requirements of AR/VR applications in a MEC environment. In Section \ref{sec::MEC} we showed the throughput-95th percentile of delay performance of Reminis and here Fig. \ref{mec_avg} will show the throughput-average delay performance of Reminis and other competing CC schemes. As Fig. \ref{mec_avg} shows, considering average delay as the metric, Reminis is capable of meeting the delay requirement of an AR/VR application while keeping the throughput around 80\%, at least 2.4 $\times$ more than other schemes which meet the delay requirement ( in the green area).

\section{Reminis Performance on Legacy
Cellular Networks}
\label{app::3G/4G}
As discussed in Section \ref{3G/4G}, Reminis is designed to contend with access link volatility hence the performance benefits of Reminis are not tied to only 5G networks and it still can be the performance frontier over other networks like 3G/4G. Fig. \ref{3G_4G_eval_app} shows the throughput-95th percentile of delay performance of Reminis and other competing CC schemes on 3G and 4G traces gathered by \cite{deepcc} and \cite{c2tcp} respectively. Based on Fig. \ref{3G_4G_eval_app}, on 3G and 4G networks, compared to BBR2, Reminis can decrease the 95th percentile of RTT by 1.475 $\times$, and 1.485 $\times$, while BBR2's throughput is only 1.1$\times$ and 1.05$\times$ more than Reminis respectively.

\section{Mahimahi Patch}
\label{appendix:MM_patch}
As mentioned in Section \ref{sec:eval:emulation}, some designs decision in Mahimahi hinders its ability to emulate high-speed links. Mahimahi logs three different events which are channel tokens( emulating the link capacity), packet enqueues, and packet dequeues. In the default code of Mahimahi, these three events are logged using a single thread function. However, this design decision is not scalable to high-speed 5G links. Dedicating one thread for each of the mentioned events, solved the problem, and the patched Mahimahi was not limited anymore. Fig. \ref{appendix:mahi} depicts the difference between the default version of Mahimahi and our patched version. We use TCP CUBIC to show the difference between these two versions. As shown in Fig. \ref{appendix:mahi}, default Mahimahi is hindered in emulating high throughput scenarios, however, with the mentioned modifications, we could solve this problem.
\begin{figure}[h]
    \centering
     \includegraphics[width=0.9\linewidth]{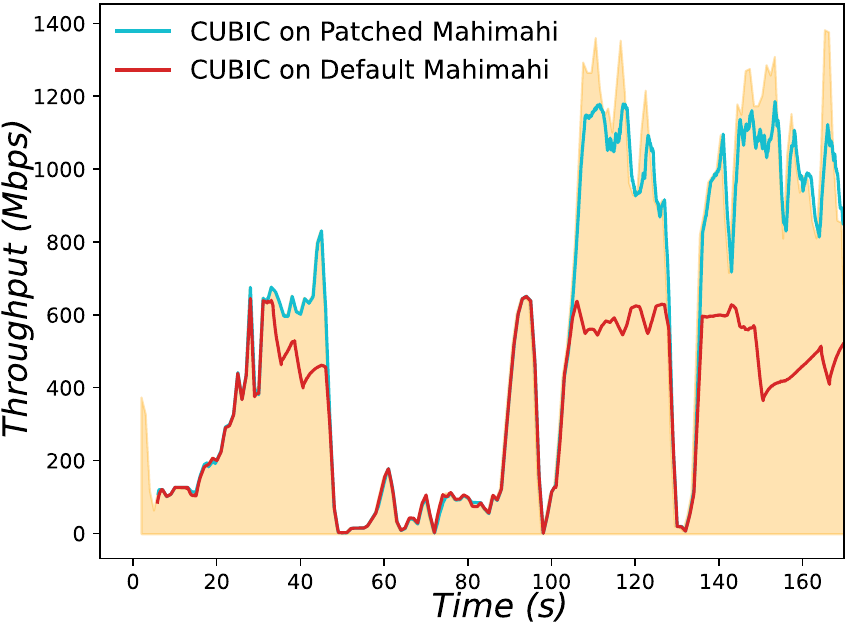}  
    \caption{Mahimahi Patch}
    \label{appendix:mahi}
\end{figure}


\section{Convergence Proof}
\label{conv_proof}

\begin{figure}[!t]
\includegraphics[width=\linewidth]{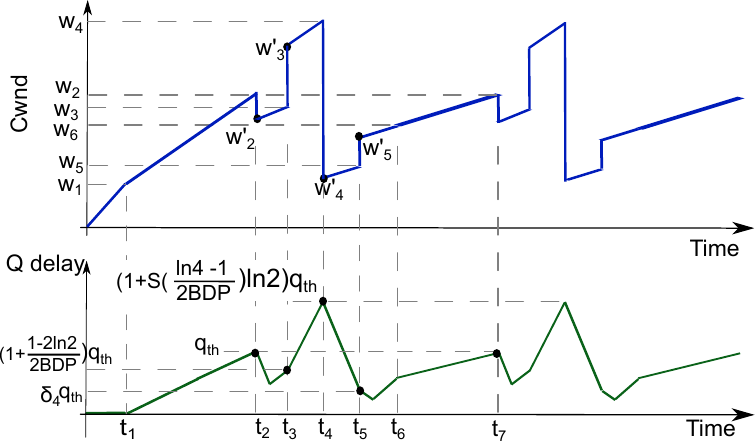}
\caption{Reminis Convergence Analysis}
\label{analysis}
\end{figure}
In this section, we analyze Reminis convergence properties over a fixed bandwidth link with an infinite buffer. We show that Reminis, on average, converges to a steady state and provide an upper bound on queuing delay caused by Reminis in this scenario.\\
Let us start by describing our assumptions, and notation: 
\begin{enumerate}
    \item Reminis DTT is $2\times mRTT$. This means that the  queuing delay that Reminis targets is $mRTT$. From now on we will denote this target queuing delay with: $q_{th} = mRTT$.
    \item BDP is $BW\times mRTT$ where $BW$ is the link capacity in packets per second.
    \item $w_1$ is the CWND that fully utilizes the link without causing any queuing delay i.e. $w_1=BDP$.
    \item $w_2$ is the CWND fully utilizing the link and creating a total e2e delay of $2mRTT$. ($w_2 =2w_1$)
     \item $w'_n$ is Reminis CWND at $SI_n$ after being modified by the PE, whereas $w_n$ is the cwnd before applying PE's changes.
     \item We assume that the AIMD block is in the congestion avoidance phase.
     \item We assume that the intrinsic RTT of the network is equal to the mRTT.
 \end{enumerate}

In this section, we prove the following main Theorem. Note that $S$ is the Sigmoid function introduced in Section \ref{desing::NDE}.
\begin{theorem}
\label{main_th}
On average, Reminis converges to a steady state with a queuing delay no more than $(1+S(\frac{\ln{4}-1}{2BDP})\ln{2})q_{th}$.
\end{theorem}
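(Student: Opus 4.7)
The plan is to model Reminis over a fixed-capacity bottleneck with an infinite buffer as a discrete-time stochastic process on the congestion window $w_n$ indexed by sampling intervals, and to show it has a stationary regime with bounded mean queuing delay. Under a standard fluid abstraction the per-SI queue is $q_n = \max(0,(w_n - w_1)/BW)$ and the RTT is $d_n = mRTT + q_n$, so the NCI zone is a deterministic function of $w_{n-1}, w_n$ and the rules of Algorithms~\ref{alg:1}-\ref{alg:2} become piecewise multiplicative updates on $w_n$. As a warm-up I would establish two-sided invariance: (i) whenever $d_n > DTT$, CM multiplies $w_n$ by $2^{\texttt{SafeZone}(d_n)} \cdot \tfrac12 < \tfrac12$, and PS preemptively multiplies it by $2^{\texttt{SafeZone}(d_{n+1})} \le 1$ once a violation is predicted, so $w_n$ cannot stay strictly above $w_2$ for more than a constant number of SIs; (ii) whenever $\nabla d_n \le 0$, NDE multiplies $w_n$ by $2^{S(x)} > 1$ with $x \sim \mathcal{N}(\mu_n, \mu_n/4)$, so the CWND is also driven out of the underutilized region.

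Next I would identify the fixed-point CWND $w^{\star}$. In a stationary regime the drift of $\log w_n$ must vanish, so the expected NDE log-multiplier on Zone-1 SIs has to cancel the expected PS log-multiplier on Zone-2 SIs. Combined with the $\mu_n \leftarrow \mu_{n-1} - \nabla d_n$ update from Section~\ref{desing::NDE}, stationarity forces $\mathbb{E}[\nabla d_n] = 0$ and pins down a unique equilibrium mean $\mu^{\star}$. Solving the resulting balance equation together with the AIMD per-SI contribution to $\nabla d$, which has magnitude on the order of $1/BDP$, is where the argument $(\ln 4 - 1)/(2\,BDP)$ of $S(\cdot)$ in the statement appears. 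Converting the equilibrium back to queuing delay via $q^{\star} = (w^{\star}-w_1)/BW$ and expanding $2^{S(\mu^{\star})} \approx 1 + S(\mu^{\star})\ln 2$ inside the fixed-point equation for $w^{\star}$ yields the claimed inequality $q^{\star} \le \bigl(1 + S\bigl(\tfrac{\ln 4 - 1}{2\,BDP}\bigr)\ln 2\bigr)\, q_{th}$.

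The main obstacle I expect is handling the NDE noise cleanly. Two pieces need care: separating the fine-grained per-ACK AIMD contribution from the SI-wide multiplicative Guardian contribution inside a single SI, and controlling the Gaussian draw in NDE so the bound survives in expectation rather than only at the mean sample $x=\mu^{\star}$. For the former I plan to lump the AIMD increment over one SI into an effective additive term of size $BW\cdot SI/w$ in the CWND recurrence. For the latter I would use $\sigma_n^2 = \mu_n/4$ with a second-order Taylor bound on $2^{S(x)}$ around $\mu^{\star}$ and Jensen's inequality, which keeps the variance-induced overshoot within the same $O(\ln 2)$ order already present in the stated bound. These ingredients should deliver a one-sided upper bound on the stationary queuing delay matching the stated constant.
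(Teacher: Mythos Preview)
Your drift/fixed-point framing does not match the dynamics the paper actually analyzes, and this is a genuine gap rather than just a stylistic difference. Under the paper's assumptions (fixed $BW$, infinite buffer, $DTT=2\,mRTT$, AIMD in congestion avoidance), Reminis does \emph{not} converge to an equilibrium CWND $w^\star$ with $\mathbb{E}[\nabla d_n]=0$ pointwise. It converges to a deterministic (in expectation) \emph{limit cycle}: AIMD pushes $w$ from $w_1$ to $w_2$ while NCI sits in Zone~2, PS fires once at the DTT boundary, one SI of queue drain flips NCI to Zone~1 so NDE fires once, the resulting overshoot triggers CM once, and after two more SIs the state is back at the analogue of the $w=w_2$, $\mu=0$ configuration. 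The theorem's bound is the \emph{peak} queuing delay along this cycle, attained at a specific SI (the one right after the single NDE draw), not a stationary mean of $q_n$. A drift-balance argument that equates the expected NDE log-multiplier with the expected PS log-multiplier would at best recover the time-averaged $q$ over the cycle, which is strictly below $q_{th}$; it cannot produce the $(1+\cdots)q_{th}$ excursion that the statement is about, and it also omits the CM step, which is what closes the cycle.

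Concretely, the constant $\tfrac{\ln 4 - 1}{2\,BDP}$ is not the solution of a balance equation. It is the value of the Gaussian mean at the unique SI where NDE fires in the cycle, obtained by bookkeeping: during the $w_1\!\to\!w_2$ climb the cumulative $\nabla d$ exactly cancels the initial $\mu=1$, leaving $\mu=0$ at the boundary; PS then multiplies $w_2$ by $2^{-1/w_1}\approx 1-\tfrac{\ln 2}{w_1}$; one SI of AIMD adds $+1$, so the net change from $w_2$ is $1-2\ln 2$; the corresponding one-SI drop in queuing delay is $\delta_3 q_{th}$ with $\delta_3=\tfrac{1-2\ln 2}{w_2}=\tfrac{1-\ln 4}{2\,BDP}$, and that single negative $\nabla d$ sets $\mu=-\delta_3=\tfrac{\ln 4-1}{2\,BDP}$. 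The NDE multiplier at that SI is, in expectation, $2^{S(-\delta_3)}\approx 1+S(-\delta_3)\ln 2$ (using the Taylor and sigmoid-of-Gaussian lemmas), and the peak delay one SI later is $(1+S(\tfrac{\ln 4-1}{2\,BDP})\ln 2)\,q_{th}$. To reach the stated bound you need to walk this trajectory SI by SI; your Jensen/second-order program on a putative $\mu^\star$ does not generate this sequence and will not land on the same constant.
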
 
\begin{proof}
We begin with two lemmas which will come in handy in the proof.
\begin{lemma}
\label{sig_exp}
The expected value of Sigmoid function where its variable is under a Normal distribution ($S(x):x\sim\mathcal{N}(\mu,\,\sigma^{2})$)is:
\begin{equation*}
    \EX[S(x)] = S(\frac{\mu}{\sqrt{1+\pi \frac{\sigma^2}{8}}})
\end{equation*}
\end{lemma}

\begin{lemma}
\label{taylor}
The Taylor series of $2^x$ for x around zero is equal to: $2^x \approx 1+ x \ln{2}+ x^2 \frac{{(\ln{2})}^2}{2!}+ x^3 \frac{{(\ln{2})}^3}{3!} + ...$
\end{lemma}
\par
To prove Theorem \ref{main_th}, we break down the time axis into smaller intervals and investigate what happens in each of them. \par
\textbf{1:} {{$\pmb 0\pmb \le \pmb t \pmb \le \pmb t_1$}}: $t_1$ is the time where Reminis CWND reaches $w_1$ for the first time. If running without Reminis, the underlying AIMD block will increase the cwnd by one every $mRTT$ which means it would take $w_1\times mRTT$ for the AIMD block alone to reach $w_1$. The behavior or Reminis during $0\le t\le t_1$ could be summarized by the following Theorem.
\begin{theorem}
Reminis helps AIMD logic to reach $w_1$ in $\mathcal{O}(\log{w_1})$ instead of $\mathcal{O}(w_1)$.\\
\end{theorem}
The proof of the above Theorem is presented in Section \ref{Speed_up_proof} of the Appendix. \par
\textbf{2:} \textbf{{$\pmb t_1 \pmb \le \pmb t \pmb < \pmb t_2$}}: Now, let us assume that $t_2$ is the first time that the cwnd reaches $w_2$.
After reaching $w_1$ at $t_1$, the AIMD block keeps increasing the cwnd by one packet each $mRTT$. From this point on, the queue starts to increase gradually, but the overall delay is still less than DTT. Also, as delay derivative is positive during $t_1 <t < t_2$, Reminis will infer Zone 2. Up until $t=t_2$, where the cwnd reaches $w_2$, the Guardian will not make any extra changes in the cwnd.
\begin{equation}
    t_2 - t_1 = (w_2 - w_1)\times mRTT  = w_1\times mRTT
\end{equation}
As we mentioned earlier, the delay will increase during $t_1 < t < t_2$ so during this time period the derivative of delay is:
\begin{equation}
\label{der_T2}
    \nabla d_{t_1 \le t \le t2}  = \frac{q_{th}}{t_2 - t_1}= \frac{1}{w_1}
\end{equation} \\
During $t_1\le t\le t_2$, there will be $\frac{w_1 \times mRTT}{mRTT} =  w_1$ SIs, as SIs happen once per $mRTT$. In each of these SIs, Reminis will measure a delay derivative calculated in Equation \ref{der_T2}. \\
Based on the update rule for the Gaussian distribution, after reaching $t_2$, the mean of the Gaussian distribution will be: $1- \sum_{n=1}^{w_1} \frac{1}{w_1} = 0$.

\textbf{3:} \textbf{\texttt{$\pmb t \pmb = \pmb t_2$}}: Proactive Slowdown module will be activated in this SI as the estimated delay for next SI, based on Equation \ref{prediction}, exceeds DTT. Based on Algorithm \ref{alg:2}, Proactive Slowdown module will decrease the cwnd and set it to be $w'_2 = w_2 \times 2^\frac{-1}{w_1}$. Using Lemma \ref{taylor}, we can further simplify the equation and have $w'_2 = w_2 \times (1- \frac{\ln{2}}{w_1})$. For simplicity, we call $\delta_1=\frac{\ln{2}}{w_1}$. Therefore, we will have: $w'_2 = w_2 (1-\delta_1)$

\textbf{4:} \textbf{\texttt{$\pmb t_2 \pmb < \pmb t \pmb < \pmb t_3 \pmb = \pmb t_2 \pmb + \pmb m \pmb R \pmb T \pmb T$}}: After changes made by Proactive Slowdown module at $t=t_2$, the queue will start draining. During $t_2 <t < t_3 $ AIMD adds one packet to cwnd so the cwnd at $t=t_3$ will be:
\begin{equation}
\begin{split}
    w_3 & = w'_2 +1 = w_2 (1-\delta_1) +1\\
    & = w_2 + (1- 2\ln{2}) = w_2 + \delta_2 
    \end{split}
\end{equation}
As $\delta_2$ is negative, we conclude: $ w'_2<w_3<w_2$. An important point during this time interval is that starting from $t = t_2$, as the cwnd is less than the number of in-flight packets, Reminis does not send any packets and the queue starts depleting. However, at $t_c = t_2+ \frac{2\ln{2}}{BW} < t_3$ the number of in-flight packets become equal to cwnd hence Reminis starts sending packets again.

Considering the queuing delay graph, we can say that at $t_3$, the queuing delay is: $q_{t_3} = (1+ \frac{\delta_2}{w_2}) q_{th} = (1 + \delta_3)q_{th}$ where $\delta_3 = \frac{\delta_2}{w_2}$.
\\
Between $t_2$ to $t_3$ the gradient of delay is negative and thus the mean of the Gaussian distribution at $t_3$ would be:\\ $\mu_3 = -\frac{\delta_3 q_{th}}{minRTT} = -\delta_3$. Note that $\delta_3$ is negative.\par
\textbf{5:} $\pmb t \pmb = \pmb t_3$. The NCI module would infer Zone 1 and hence the Non-Deterministic Exploration module will be activated by the PE. Considering $w_2\gg 1$, we have: 
\begin{equation}
\label{w3}
\EX[w'_3]=\EX[w_3 \times 2^{s(x \sim \mathcal{N}(-\delta_3,\frac{-\delta_3}{4}))}]
\end{equation}
\\ We can further simplify Equation \ref{w3} by using Lemma \ref{taylor} and Lemma \ref{sig_exp}.
\begin{equation}
\begin{split}
    w'_3 &= w_3 \times \EX[2^{S(x)})] \approx w_3 \times \EX[1+S(x) \ln{2}] \\
    &\approx w_3 \times [1+ \ln2 \times S(\frac{-\delta_3}{\sqrt{1+ \pi \frac{{\sigma_3}^2}{8}}})]  = w_3 \times (1+\delta_4)
    \end{split}
\end{equation}
To be more precise, $t=t_3$ is the time when Reminis tries to explore larger cwnd values to increase throughput. It can be shown that $w'_3 > w_2$. \par

\textbf{6:} \textbf{\texttt{$\pmb t_3 \pmb < \pmb t \pmb < \pmb t_4 \pmb = \pmb t_3 \pmb + \pmb m \pmb R \pmb T\pmb T$}}: During this time the AIMD module adds one packet to the cwnd so we will have $w_4 =w'_3 +1$. The queuing delay at $t_4$ approximately would be: $d_4 = (1+\delta_4)q_{th}$. \\
At $t=t_4$, the mean of the Gaussian distribution will be $\mu_4 = -\delta_4$. The mean of the Gaussian distribution is negative as Reminis has just had an unsuccessful throughput exploration, which makes it more conservative for further explorations. \par
\textbf{7:} \textbf{\texttt{$\pmb t\pmb =\pmb t_4$}}: As $d_4$ is more than DTT, the Catastrophe Mitigation module is activated. We have: \texttt{SafeZone}($d_4$)=$- \delta_4$.\\ 
So, using Lemma \ref{taylor}, we can say we have:
\begin{equation}
        w'_4 = w_4 \times 0.5 \times 2^{-\delta_4}= w_1 (1+ \delta_4 (1-\ln{2}))
\end{equation}
Hence, at $t=t_4$, the Guardian reduces the cwnd to be a bit more than $w_1$. With this modification by the Catastrophe Mitigation module, the number of in-flight packets becomes bigger than the cwnd value.\par
\textbf{8:} \textbf{\texttt{$\pmb t_4 \pmb <\pmb t \pmb < \pmb t_5\pmb =\pmb t_4 \pmb + \pmb m\pmb R\pmb T\pmb T$}}: Up until $t=t_5$, the number of in-flight packets is more than cwnd. So no additional packet is sent and the queue keeps depleting. However, the underlying AIMD keeps adding one packet to the cwnd every mRTT. Therefore, we will have: $w_5 = w_4 +1$.\\
At $t_5$, we have: $q_5 =\delta_4 q_{th}$. Based on this value, the mean of Gaussian distribution will be updated to $\mu_5 = 1 - \delta_4$. \par

\textbf{9:} \textbf{\texttt{$\pmb t\pmb =\pmb t_5$}}: At $t_5$, the Non-Deterministic Exploration module gets activated as the NCI module detects Zone 1. Based on Lemma \ref{sig_exp} and \ref{taylor}, we can calculate $\EX[w'_5]$:
\begin{equation}
        \EX[w'_5] = w_5 [1+ \ln{2} \times S (\frac{1-\delta_4}{\sqrt{1+\pi \frac{(1-\delta_4)^2}{8}}})] = w_5 (1+\delta_5)
\end{equation}
Even with this increase in the cwnd, the number of in-flight packets is still more than the cwnd value as $w_4 - w_1$ which is the number of in-flight packets is bigger than $w_5$. However, it can be easily shown that $w_5 > w_4 - 2\times w_1$. Hence, we can conclude at some point in the next SI, the cwnd will be equal to the number of in-flight packets. \par

\textbf{10:} \textbf{\texttt{$t_5< t < t_6 = t_5 + mRTT$}}: At $t_6$, the cwnd will be $w_6 =w'_5 +1 $. Before reaching $t_6$, as the number of in-flight packets gets less than the cwnd, Reminis starts sending packets. So, at $t=t_6$, the queuing delay will be: $q_6 = q_{th} \times \frac{1+ \delta_5+\delta_4(1-\ln{2})}{2}$. Moreover, at $t_6$ the mean of Gaussian distribution will be $\mu_6 =\frac{1-\delta_5-\delta_4(1-\ln{2})}{2}$.

After $t_6$, the NCI module will infer Zone 2 and hence the Guardian will not change the cwnd value. This event is repeated in every SI until the cwnd reaches $w_2$ again at $t_7$. At this moment, the Gaussian mean has been reset to 0. So, this moment is similar to $t_2$. Hence, we can conclude that on average, Reminis performs similarly to $[t_2,t_7]$ periodically, and that leads to its steady state behavior.
\end{proof}
\begin{figure*}[!t]
\centering
\begin{subfigure}{.33\textwidth}
  \centering
  \includegraphics[width=0.9\linewidth,height=1.5in]{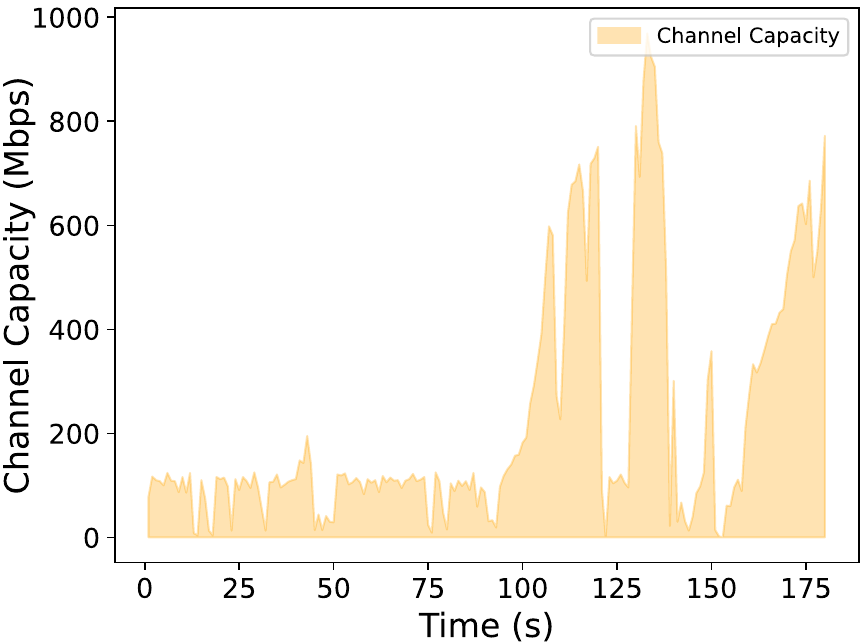}
\end{subfigure}%
\begin{subfigure}{.33\textwidth}
  \centering
  \includegraphics[width=0.9\linewidth,height=1.5in]{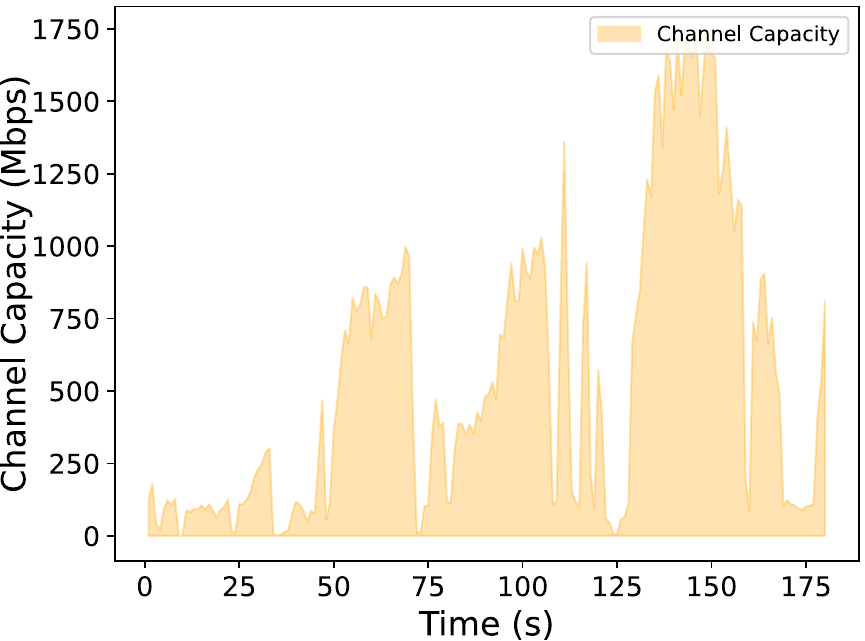}
\end{subfigure}%
\begin{subfigure}{.33\textwidth}
  \centering
  \includegraphics[width=0.9\linewidth,height=1.5in]{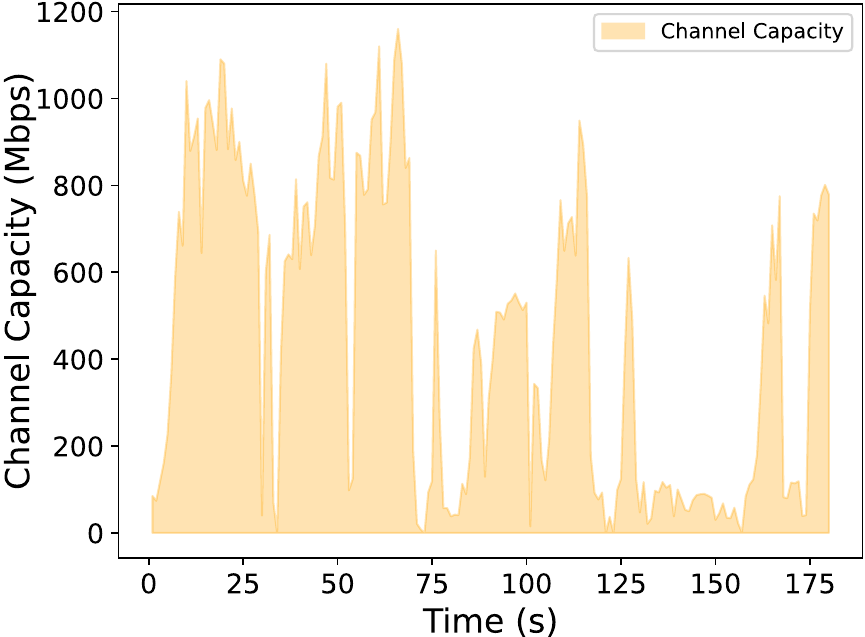}
\end{subfigure}


\begin{subfigure}{.33\textwidth}
  \centering
  \includegraphics[width=0.9\linewidth,height=1.5in]{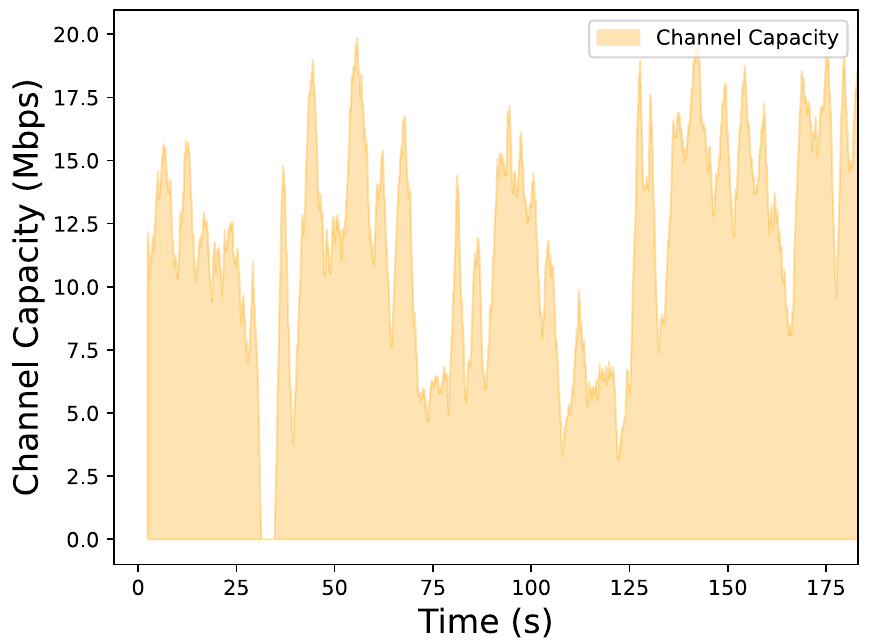}
\end{subfigure}%
\begin{subfigure}{.33\textwidth}
  \centering
  \includegraphics[width=0.9\linewidth,height=1.5in]{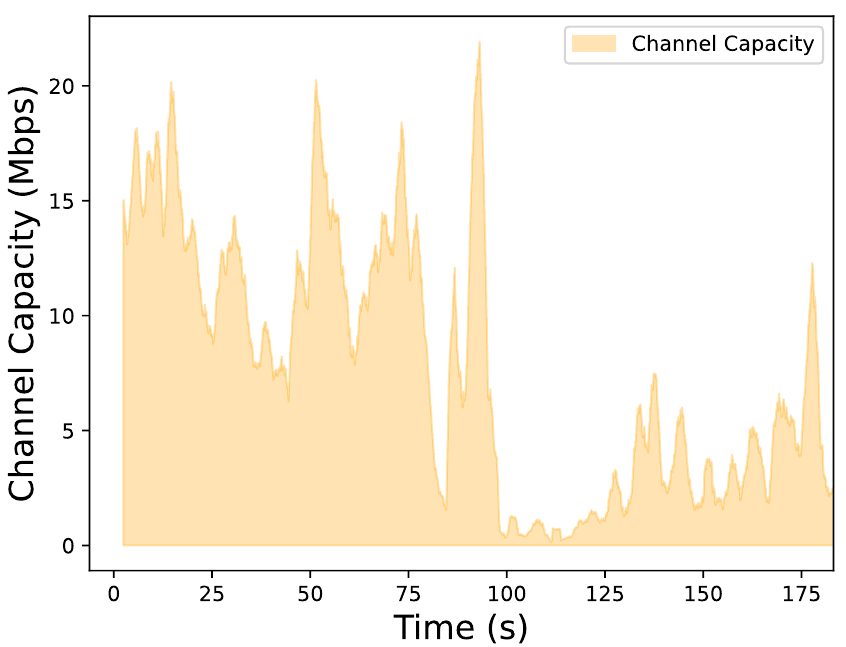}
\end{subfigure}%
\begin{subfigure}{.33\textwidth}
  \centering
  \includegraphics[width=0.9\linewidth,height=1.5in]{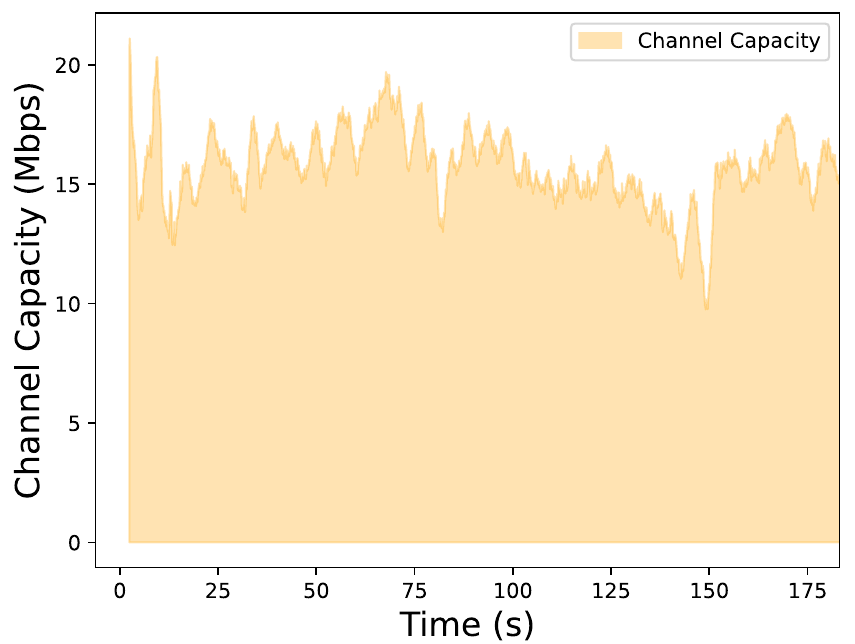}
\end{subfigure}

\begin{subfigure}{.33\textwidth}
  \centering
  \includegraphics[width=0.9\linewidth,height=1.5in]{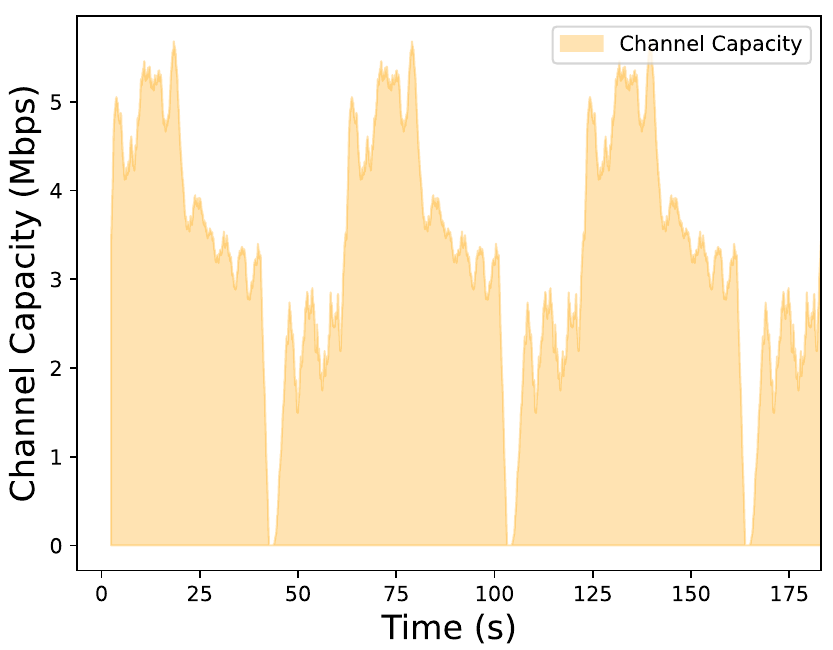}
\end{subfigure}%
\begin{subfigure}{.33\textwidth}
  \centering
  \includegraphics[width=0.9\linewidth,height=1.5in]{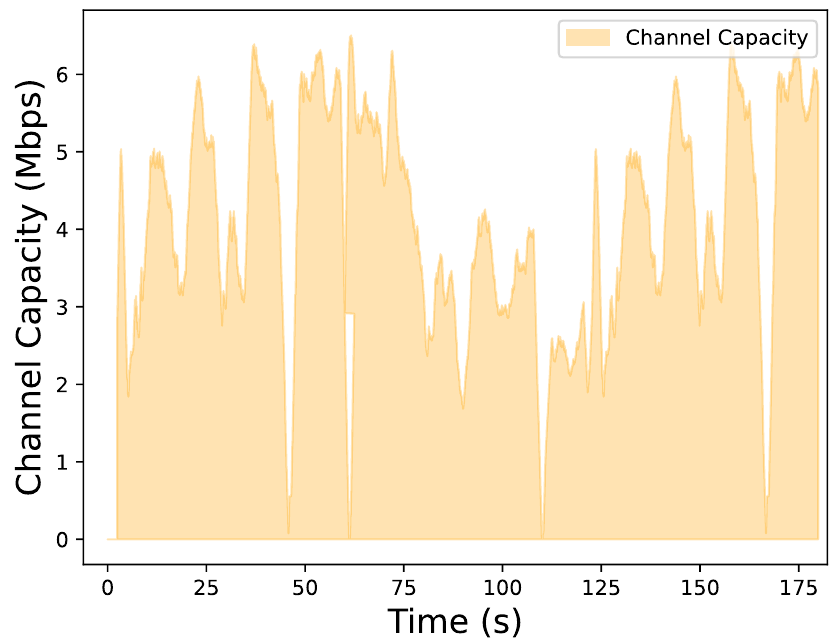}
\end{subfigure}%
\begin{subfigure}{.33\textwidth}
  \centering
  \includegraphics[width=0.9\linewidth,height=1.5in]{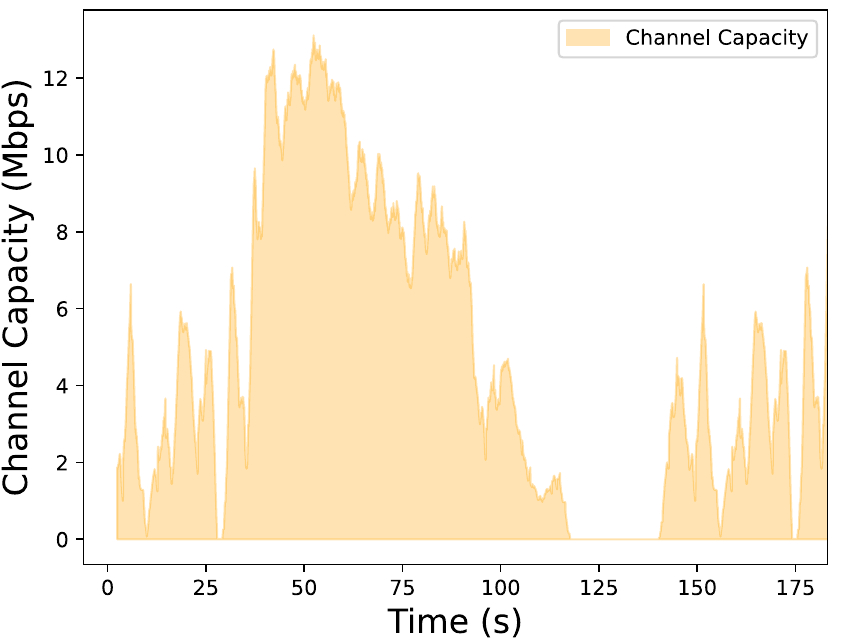}
\end{subfigure}

\caption{Sample of Traces Collected by Prior Works Used in Emulations (5G Traces ( First Row) \cite{narayanan2020lumos5g}, 4G Traces ( Second Row \cite{deepcc}, 3G Traces (Third Row) \cite{c2tcp})}
\label{fig::trace_sample}
\end{figure*}
\subsection{Speed Up Proof}
\label{Speed_up_proof}
\begin{theorem*}
Reminis helps the AIMD logic to reach $w_1$ in $\mathcal{O}(\log{}w_1)$ instead of $\mathcal{O}(w_1)$ in congestion avoidance phase.\\
\end{theorem*}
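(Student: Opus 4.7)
The plan is to exploit the fact that whenever the sending rate is strictly below the bottleneck capacity, the NCI module infers Zone 1, so the NDE module is activated every SI and multiplies the congestion window by $2^{S(x)}$ with $x\sim\mathcal{N}(\mu_n,\sigma_n^2)$. I would combine this multiplicative growth with a lower bound on $\mathbb{E}[S(x)]$ to conclude that $w_n$ at least doubles (in expectation) by a factor bounded away from $1$ each SI, which immediately yields the $\mathcal{O}(\log w_1)$ bound.

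First I would set up the ``pre-$w_1$'' regime. Starting from a small window $w_0$ (say, the initial cwnd in congestion avoidance) and while $w_n<w_1$, the flow is not even filling the pipe, so the instantaneous queuing delay is zero and the measured RTT stays at $\mathrm{mRTT}$. This gives $d_n\le \mathrm{DTT}$ and $\nabla d_n\le 0$, placing us in Zone 1, so Algorithm~\ref{alg:1} invokes NDExploration at every SI. Next I would track the evolution of the Gaussian parameters: since the cumulative delay derivative $\mu_n$ is updated by $\mu_n\leftarrow \mu_{n-1}-\nabla d_n\ge \mu_{n-1}$, we have $\mu_n\ge \mu_0\ge 0$ throughout this regime.

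Using Lemma~\ref{sig_exp}, $\mathbb{E}[S(x)]=S\bigl(\mu_n/\sqrt{1+\pi\sigma_n^2/8}\bigr)\ge S(0)=\tfrac12$ whenever $\mu_n\ge 0$. Hence, taking expectations over the random draw,
\begin{equation*}
\mathbb{E}[w_{n+1}\mid w_n] \;=\; w_n\,\mathbb{E}\!\left[2^{S(x)}\right]\;\ge\; w_n\cdot 2^{\mathbb{E}[S(x)]}\;\ge\; w_n\cdot 2^{1/2},
\end{equation*}
by Jensen's inequality applied to the convex function $2^y$. Iterating this, $\mathbb{E}[w_n]\ge w_0\cdot 2^{n/2}$, so the smallest $n$ for which $\mathbb{E}[w_n]\ge w_1$ satisfies $n\le 2\log_2(w_1/w_0)=\mathcal{O}(\log w_1)$. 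The underlying AIMD block contributes only additive $+1$ increments in the same interval, which cannot slow the bound since Reminis dominates AIMD window by window.

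The main obstacle I expect is making the Zone-1 invariant genuinely hold until $w$ actually crosses $w_1$: once the window gets near $w_1$, the NDE multiplicative jump can overshoot, create a brief positive delay derivative, and flip the inferred zone to $2$. I would handle this by defining the stopping time $\tau=\min\{n:w_n\ge w_1\}$ and arguing that the Zone-1 analysis above applies for all $n<\tau$, which is all we need for the claim; the first SI at which overshoot occurs is itself a time when $w_n\ge w_1$, so it does not affect the upper bound on $\tau$. A secondary concern is replacing the expectation statement with a high-probability (or almost-sure) bound; since $S(x)\in(0,1)$ deterministically, each SI contributes a strictly positive multiplicative factor, and a standard Azuma/Chernoff-style concentration on $\sum_{k<n}S(x_k)$ upgrades the expectation bound to a with-high-probability $\mathcal{O}(\log w_1)$ convergence, matching the statement.
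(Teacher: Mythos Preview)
Your proposal is correct and follows essentially the same approach as the paper: while $w<w_1$ the queue is empty, NCI infers Zone~1 every SI, and NDE's multiplicative update yields geometric cwnd growth, giving $\mathcal{O}(\log w_1)$ SIs. The only technical difference is that the paper plugs in the fixed default parameters $\mu=1,\ \sigma^2=\tfrac14$ (since $\nabla d=0$ throughout this phase) and evaluates $\mathbb{E}[2^{S(x)}]\approx 1.5$ directly to obtain the recursion $cwnd_n=1.5(cwnd_{n-1}+1)$, whereas you lower-bound the multiplier by $\sqrt{2}$ via Jensen and $\mathbb{E}[S(x)]\ge\tfrac12$; both routes give the same logarithmic bound.
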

\begin{proof}
Until reaching $w_1$, in every SI, NCI will infer Zone 1. \\Hence the Non-Deterministic Exploration module will be activated each time during $0\le t\le t_1$. Note that as no queue has been built up yet, the derivative of delay is equal to zero during $0\le t\le T_1$. Therefore, the Gaussian distribution will not change and will continue to work with its default values which are $\mu=1$ and $\sigma^2 = \frac{1}{4}$.
\\ Equation \ref{Gaussain output} shows the expected value of the output of the Non-Deterministic Exploration module.
\begin{equation}
\label{Gaussain output}
\EX[2^{s(x \sim \mathcal{N}(1,\frac{1}{4}))}] \approx 1.5
\end{equation}
Equation \ref{recursive GP} shows that in every $SI_n$, during $0\le t\le t_1$, on average, how the congestion window changes. The additive term ($+1$) is the effect of the underlying AIMD module and the multiplicative term is the effect of the Non-Deterministic Exploration(the Guardian) module.
\begin{equation}
\label{recursive GP}
    cwnd_n = 1.5(cwnd_{n-1} +1 )
\end{equation}
The recursive function mentioned in \ref{recursive GP}, could be written in general form as follow:
\begin{equation}
\begin{split}
    cwnd_n &=  \sum_{i=1}^{n-1} 1.5^{i}\\
    &= \frac{1}{3}(10^{1-n}\times13^{n}-13)
\end{split}
\end{equation}
where:
     $cwnd_1=0$\\

Now let's assume that the SI in which the congestion window will reach $w_1$ is $n_{w_1}$. For finding $n_{w_1}$, its enough to solve:
\begin{equation}
    \frac{1}{3}(10^{1-n_{w_1}}\times13^{n_{w_1}}-13) = w_1
\end{equation} 
It can be shown for large values of $w_1$, solving the above equation we will have:
\begin{equation}
\label{final_ramp-up}
    n_{w_1} < 4log(3w_1)
\end{equation}
As each SI takes $mRTT$, Reminis will on average reach $w_1$ in $4log(3w_1)\times mRTT$, however for an AIMD module, it will take  $w_1\times mRTT$.
\end{proof}

\section{Samples of the 5G Traces Used}
\label{appendix:traces}
Fig. \ref{fig::trace_sample} depicts a sample of traces collected by prior works \cite{narayanan2020lumos5g},\cite{c2tcp} and \cite{deepcc} which are the base of our emulations.
The first row is for 5G traces, the samples in the second row are 4G traces and the final row is for 3G traces. \par As Fig. \ref{fig::trace_sample} shows, all these traces, have a degree of volatility in the access link capacity which makes Reminis a suitable CC algorithm for these networks. However, 5G traces have higher maximum throughput, 87.5 $\times$ and 350 $\times$ compared to 4G and 3G traces respectively. Moreover, 5G traces have more severe changes in the access link capacity which are caused by the changing dynamics of the environment. Two important events are revealed in 5G traces in Fig. \ref{fig::trace_sample}. First is the 4G to 5G or vice versa handovers. In the 5G to 4G handover, the link capacity decreases suddenly from around 1 Gbps to around 100-200 Mbps (the link capacity of 4G links). The opposite happens during 4G to 5G handovers. The second important events are 5G dead zones, in which link capacity drops close to zero. These zones can be seen when a 5G to 5G handover occurs or simply because the user gets out of coverage of any 4G/5G base station. A successful 5G scheme, should be able to react properly to these changes without making any strict assumptions as these events are non-deterministic.

\end{appendices}

\end{document}